\pgfplotsset{compat=1.5}
\title{Quantum Algorithm for Triangle Finding in Sparse Graphs}
\author{
Fran{\c c}ois Le Gall  \qquad\qquad
Shogo Nakajima\vspace{2mm}\\
Department of Computer Science\\
Graduate School of Information Science and Technology\\
The University of Tokyo, Japan
}
\date{}
\newtheorem{theorem}{Theorem}
\newtheorem{proposition}{Proposition}[section]
\newtheorem{definition}{Definition}[section]
\newtheorem{lemma}{Lemma}[section]
\newcommand{\mydelta}[1]{\Delta_G(X,#1)}
\newcommand{\E}{\mathcal{E}}
\newcommand{\Ss}{\mathcal{S}}
\newcommand{\polylog}{\mathrm{polylog}}
\newcommand{\poly}{\mathrm{poly}}
\newcommand{\ket}[1]{\vert #1 \rangle}
\newcommand{\ceil}[1]{\left\lceil #1 \right\rceil}
\begin{document}

\maketitle

\begin{abstract}
This paper presents a quantum algorithm for triangle finding over sparse graphs that improves over the previous best quantum algorithm for this task by Buhrman et al. [SIAM Journal on Computing, 2005]. Our algorithm is based on the recent $\tilde O(n^{5/4})$-query algorithm given by Le Gall [FOCS 2014] for triangle finding over dense graphs (here~$n$ denotes the number of vertices in the graph). We show in particular that triangle finding can be solved with $O(n^{5/4-\epsilon})$ queries for some constant $\epsilon>0$ whenever the graph has at most $O(n^{2-c})$ edges for some constant $c>0$.  
\end{abstract}
\section{Introduction}
\paragraph{Background.}
Triangle finding asks to decide if a given undirected graph $G=(V,E)$ contains a cycle of length three, i.e., whether there exist three vertices $u_1,u_2,u_3\in V$ such that 
$\{u_1,u_2\}\in E$, $\{u_1,u_3\}\in E$ and $\{u_2,u_3\}\in E$. This problem has received recently a lot of attention, for the following reasons.

First, several new applications of triangle finding have been discovered recently. In particular, Vassilevska Williams and Williams have shown a surprising reduction from Boolean matrix multiplication to triangle finding \cite{Williams+FOCS10}, which indicates that efficient algorithms for triangle finding may be used to design efficient algorithms for matrix multiplication, and thus also for a vast class of problems related to matrix multiplication. Relations between variants of the standard triangle finding problem (such as triangle finding over weighted graphs) and well-studied algorithmic problems (such as 3SUM)  have also been shown in the past few years  (see for instance \cite{PatrascuSTOC10,Vassilevska+SICOMP13}).

Second, triangle finding is one of the most elementary graph theoretical problems whose complexity is unsettled. In the time complexity setting, the best classical algorithm uses a reduction to matrix multiplication \cite{Itai+SICOMP78} and solves triangle finding in time $O(n^{2.38})$, where $n$ denotes the number of vertices in $G$. In the time complexity setting again, Grover search \cite{GroverSTOC96} immediately gives, when applied to triangle finding as a search over the set of triples of vertices of the graph, a quantum algorithm with time complexity $\tilde O(n^{3/2})$, which is still the best known upper bound for the quantum time complexity of this problem.\footnote{In this paper the notation $\tilde O(\cdot)$ removes $\polylog n$ factors.} In the query complexity setting, where an oracle to the adjacency matrix of the graph is given and only the number of calls to this oracle is counted, a surge of activity has lead to quantum algorithms with better complexity. Magniez, Santha and Szegedy \cite{Magniez+SICOMP07} first presented a quantum algorithm that solves triangle finding with $\tilde O(n^{1.3})$ queries. This complexity was later improved to $O(n^{1.296\ldots})$ by Belovs \cite{BelovsSTOC12}, then to $O(n^{1.285\ldots})$ by Lee, Magniez and Santha \cite{Lee+SODA13} and Jeffery, Kothari and Magniez \cite{Jeffery+SODA13}, and further improved recently to $\tilde O(n^{5/4})$ by Le Gall \cite{LeGallFOCS14}. The main open problem now is to understand whether this $\tilde O(n^{5/4})$-query upper bound is tight or not. The best known lower bound on the quantum query complexity of triangle finding is the straightforward $\Omega(n)$ lower bound. 

Another reason why triangle finding has received much attention from the quantum computing community is that 
work on the quantum complexity of triangle finding has been central to the development of algorithmic techniques. Indeed, all the improvement mentioned in the previous paragraph have been obtained by introducing either new quantum techniques or new paradigms for the design of quantum algorithms: applications of quantum walks to graph-theoretic problems \cite{Magniez+SICOMP07}, introduction of the concept of learning graphs \cite{BelovsSTOC12} and improvements to this technique \cite{Lee+SODA13}, introduction of quantum walks with quantum data structures \cite{Jeffery+SODA13}, association of combinatorial arguments with quantum walks \cite{LeGallFOCS14}.

\paragraph{Triangle finding in sparse graphs.}
The problem we will consider in this paper is triangle finding over sparse graphs (the graphs considered are, as usual, undirected and unweighted). If we denote $m$ the number of edge of the graph (i.e., $m=|E|$), the goal is to design algorithms with complexity expressed as a function of $m$ and $n$. Ideally, we would like to show that if $m=n^{2-c}$ for any constant $c>0$ then triangle finding can be solved significantly faster than in the dense case (i.e., $m\approx n^2$). Besides its theoretical interest, this problem is of practical importance since in many applications the graphs considered are sparse.  

Classically, Alon, Yuster and Zwick \cite{Alon+97} constructed an algorithm exploiting the sparsity of the graph and working in time $O(m^{1.41})$, which gives better complexity than the $O(n^{2.38})$-time complexity mentioned above when $m\le n^{1.68}$. Understanding whether an improvement over the dense case is also possible for larger values $m$ is a longstanding open problem. Note in the classical query complexity setting it is easy to show that the complexity of triangle finding is $\Theta(n^2)$, independently of the value of~$m$.

In the quantum setting, using amplitude amplification, Buhrman et al.~\cite{Buhrman+SICOMP05} showed how to construct a quantum algorithm for triangle finding with time and query complexity $O(n+\sqrt{nm})$. This upper bound is tight when $m\le n$ since the $\Omega(n)$-query lower bound for the quantum query complexity of triangle finding already mentioned also holds when $m$ is a constant. 
Childs and Kothari \cite{Childs+SICOMP12} more recently developed an algorithm, based on quantum walks, that detects the existence of subgraphs in a given graph. Their algorithm works for any constant-size subgraph. For detecting the existence of a triangle, however, the upper bound they obtain is $\tilde O(n^{2/3}\sqrt{m})$ queries for $m\ge n$, which is worse that the bound obtained in \cite{Buhrman+SICOMP05}. Buhrman et al.'s result in particular gives an improvement over the $\tilde O(n^{5/4})$-query quantum algorithm algorithm whenever $m\le n^{3/2}$. A natural question is whether a similar improvement can be obtained for larger values of $m$. For instance, can we obtain query complexity $\tilde O(n^{5/4-\epsilon})$ for some constant $\epsilon>0$ when $m\approx n^{1.99}$? A positive answer would show that even a little amount of sparsity can be exploited in the quantum query setting, which is not known to be true in the classical setting as mentioned in the previous paragraph.

\paragraph{Our results.}
In this paper we answer positively to the above question. Our main result is as follows.
\begin{theorem}\label{th:main}
There exists a quantum algorithm that solves, with high probability, the triangle finding problem over graphs of $n$ vertices and $m$ edges with query complexity
\[
\begin{cases}
O(n+\sqrt{nm}) & \textrm{ if }\:\: 0 \leq m \leq n^{7/6}, \\
\tilde O(nm^{1/14}) & \textrm{ if }\:\: n^{7/6} \leq m \leq n^{7/5}, \\
\tilde O(n^{1/6}m^{2/3}) & \textrm{ if }\:\:n^{7/5} \leq m  \leq n^{3/2}, \\
\tilde O(n^{23/30}m^{4/15}) & \textrm{ if }\:\:n^{3/2} \leq m  \leq n^{13/8}, \\
\tilde O(n^{59/60}m^{2/15}) & \textrm{ if }\:\: n^{13/8} \leq m \leq n^2.
\end {cases}
\]
\end{theorem}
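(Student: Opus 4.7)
The first range $0\le m\le n^{7/6}$ is covered directly by the algorithm of Buhrman et al.~\cite{Buhrman+SICOMP05}, whose $O(n+\sqrt{nm})$ query bound applies unconditionally, so no further work is required there. For each of the four remaining ranges, the plan is to adapt the dense-graph algorithm of Le Gall~\cite{LeGallFOCS14}, which samples a random set $A\subseteq V$ of size~$r$ and proceeds in two stages: stage~(i) tests, for each $a\in A$, whether $a$ lies in a triangle by searching its neighborhood, and stage~(ii), assuming stage~(i) fails, runs a nested quantum walk over subsets of $V\setminus A$ to detect a triangle that avoids~$A$, exploiting a combinatorial lemma of~\cite{LeGallFOCS14} that bounds the ``hard'' pairs in $V\setminus A$ with high probability.

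The adaptation inserts the edge count $m$ into every layer of this construction. In stage~(i), the cost of neighborhood search should drop from $\tilde O(r\sqrt{n})$ toward $\tilde O(r\sqrt{m/n})$ (or to a Grover-type search over $E$ of cost $\tilde O(\sqrt{m})$ per sampled vertex), exploiting the fact that the average degree is $m/n$. In stage~(ii), the combinatorial lemma of \cite{LeGallFOCS14} can be strengthened because any ``hard'' pair in $V\setminus A$ must itself be an edge: instead of $\tilde O(n^2/r)$ candidate pairs one expects only $\tilde O(m/r)$ of them, which tightens the complexity of the inner quantum walk in the Ambainis/MNRS framework. The resulting per-regime cost becomes a function of $r$, of the walk length, and of~$m$.

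The remaining step is to optimize the free parameters in each range of $m$; the four breakpoints $n^{7/6},n^{7/5},n^{3/2},n^{13/8}$ should appear as the values of $m$ at which the optimal choice changes or at which a different stage becomes the bottleneck. At the endpoint $m=n^2$ the bound must collapse back to Le Gall's $\tilde O(n^{5/4})$, which serves as a sanity check for the last regime, and at $m=n^{7/6}$ it must match $O(\sqrt{nm})=O(n^{13/12})$, which is the crossover with Buhrman et al.\ and should emerge automatically from the optimization.

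I expect the main technical obstacle to lie in the sparsified analogue of the combinatorial lemma from \cite{LeGallFOCS14}: its probabilistic argument is already delicate in the dense setting, and obtaining a sharp $\tilde O(m/r)$-type bound on the number of surviving pairs in $V\setminus A$ requires a careful treatment of how the random sampling of $A$ interacts with the edge set, rather than only with the vertex set. A secondary difficulty is the data-structure bookkeeping inside the nested quantum walk: its setup, update and checking costs must each be reexpressed in terms of $m$ and the average degree $m/n$ so that the MNRS cost formula cleanly yields the four piecewise bounds after optimization, without accumulating hidden $n$-factors that would wash out the sparsity gain.
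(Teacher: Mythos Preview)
Your proposal has a genuine gap: it misses the central structural idea of the paper, namely the partition of $V$ into \emph{high-degree} and \emph{low-degree} vertices with respect to a threshold $n^d$. You plan to plug the average degree $m/n$ directly into the setup, update, and checking costs of the nested walk, but this does not work. The cost of a step in a quantum walk over a Johnson graph must be bounded in the \emph{worst case} over all states of the walk; there is no general way to run the Ambainis/MNRS framework with only an average-cost guarantee. A graph with $m$ edges can have vertices of degree up to $m$, and in particular the vertices of the unique triangle may all have large degree, so replacing ``degree of $v$'' by ``$m/n$'' in the update or checking cost is unjustified. The paper circumvents this by first computing, via quantum counting and enumeration, a partition $V=V^d_h\cup V^d_l$ with $\deg(v)\le \tfrac{11}{10}n^d$ for every $v\in V^d_l$ and $|V^d_h|=O(m/n^d)$, and then running \emph{different} instantiations of the walk for each of the four triangle types (three low; two low, one high; one low, two high; three high). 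For low-degree vertices one has a worst-case degree bound, which is what drives the improved setup and update costs (via a concentration bound on $|N_G(v)\cap X|$); for high-degree vertices one instead uses that there are few of them. The parameter $d$ is an extra free variable in the final optimization and is what produces the breakpoints.

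A second, smaller issue is your description of how sparsity enters stage~(ii). The combinatorial lemma on $k$-good sets from \cite{LeGallFOCS14} is used essentially unchanged; the bound $\sum_w|\Delta_G(X,A,w)|\le |A|^2|V_1|^{1-k}$ is not sharpened to something like $m/r$. Sparsity is exploited elsewhere: (a) the first step (finding a triangle with a vertex in $X$) is replaced by the Buhrman et al.\ amplitude-amplification trick, giving $\tilde O(n+\sqrt{n^k m})$ instead of $\tilde O(n^{k/2}\cdot n)$; (b) the setup and update of the \emph{outer} walk are sped up because, for low-degree $v$, one can enumerate $N_G(v)\cap X$ in $\tilde O(\sqrt{|X|\cdot n^{d+k-1}})$ queries rather than $|X|$; and (c) the setup of the \emph{inner} walk is sped up by enumerating only the expected number of neighbours of $w$ inside $B$, at the price of adding a third marking condition and arguing that it does not hurt the fraction of marked states. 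None of these modifications is the ``$\tilde O(m/r)$ surviving pairs'' mechanism you sketch.
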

The complexity bounds of Theorem \ref{th:main} are depicted in Figure \ref{fig1}. 
For the dense case (i.e., $m\approx n^2$) we recover the same complexity $\tilde O(n^{5/4})$ as in \cite{LeGallFOCS14} --- in this case it turns out that our algorithm applies exactly the same procedure as in \cite{LeGallFOCS14}.
Whenever $m=n^{2-c}$ for some constant $c>0$ (in particular, for $m\approx n^{1.99}$), we indeed obtain query complexity $\tilde O(n^{5/4-\epsilon})$ for some constant $\epsilon>0$ depending on~$c$. The query complexity of our algorithm is better than the query complexity of Buhrman et al.'s algorithm \cite{Buhrman+SICOMP05} whenever $m\gtrsim n^{7/6}$. When $m\lesssim n^{7/6}$ we obtain the same complexity $O(n+\sqrt{nm})$ as in \cite{Buhrman+SICOMP05} --- in this case it turns out that our algorithm applies exactly the same procedure as in \cite{Buhrman+SICOMP05}.
 
\begin{figure}[ht]
\centering
\begin{tikzpicture}
\begin{axis}[
legend cell align=left,
width=9.5cm, 
height=6.5cm,
xmin=0.9, xmax=2.1,
thick,
 scale only axis,
xmajorgrids,
ymajorgrids,
 xtick={1,1.16666,1.4,1.5,1.625,2},
 xticklabels={$n$,$n^{7/6}$, $n^{7/5}$,$n^{3/2}$,$n^{13/8}$,$n^2$},
  ytick={1,1.1,1.16666,1.2,1.25,1.5,1.66666},
 yticklabels={$n$, $n^{11/10}$,$n^{7/6}$,$n^{6/5}$,$n^{5/4}$,$n^{3/2}$,$n^{5/3}$},
xlabel={Number of edges $m$},
ylabel={Query complexity},
legend pos=south east]

\addplot [solid, every mark/.append style={solid, fill=gray}] coordinates {
(0.5,1)
(1,1)
(1.166666,1.08333)
(1.4,1.1)
(1.5,1.166666)
(1.625,1.2)
(2,1.25)
};

\addplot [dotted, every mark/.append style={solid, fill=gray}] coordinates {
(0.5,1)
(1,1)
(2,1.5)
};

\addplot [only marks, every mark/.append style={solid, fill=gray}, mark=otimes*] coordinates {
(1.166666,1.08333)
(1.4,1.1)
(1.5,1.16666)
(1.625,1.2)
(2,1.25)
(1,1)
(2,1.5)
};

\legend{Our algorithm (Theorem \ref{th:main}), Buhrman et al. \cite{Buhrman+SICOMP05}}
\end{axis}
\end{tikzpicture}
\caption{\label{fig1}Quantum query complexity of triangle finding on a graph with $n$ vertices and $m$ edges.}

\end{figure}
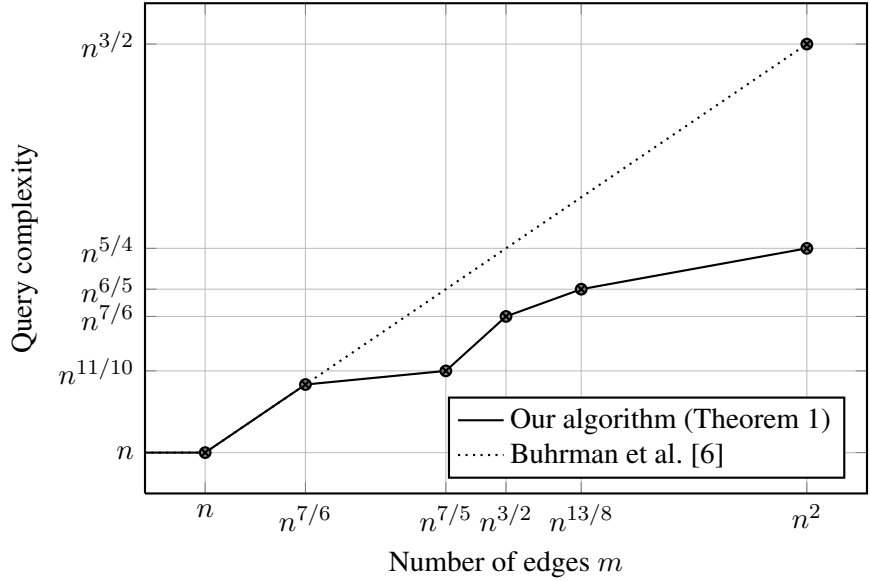

\setlength\textfloatsep{8pt}

\paragraph{Overview of our techniques.}
The main idea is to adapt the $\tilde O(n^{5/4})$-query quantum algorithm for triangle finding \cite{LeGallFOCS14} to handle sparse graphs. This algorithm works in two steps: a first step based on Grover search that detects the existence of a triangle in a well-chosen small part of the graph $G$, and a second step based on recursive quantum walks that detects the existence of a triangle in the remaining (large) part of the graph. A first simple observation is that the first step can be implemented faster in the case of sparse graphs by applying the quantum algorithm by Buhrman et al.~\cite{Buhrman+SICOMP05} on the small part of~$G$ instead of using Grover search. This observation alone however does not give any interesting speed-up unless sparsity is exploited in the second step as well. The hard part is actually to adapt the recursive quantum walk approach to the case of sparse graphs, and we outline below our main ideas to achieve this goal. 

The main issue is that the implementation of the recursive quantum walks described \cite{LeGallFOCS14} is really tailored for their application on dense graphs; when trying to use the same implementation for sparse graphs prohibitive intermediate costs (of order $\tilde O(n^{5/4})$, which is fine for the dense case, but not for the sparse case) appear. To overcome this difficulty, we need to modify partially the original approach in several ways, such as modifying how the inner quantum walk checks if it has found a solution and adjusting how the sets of marked states of the walk are defined, to fully exploit the sparsity of the graph. 

Several more technical issues have also to be dealt with. The complexity of a quantum walk basically depends on the complexity of three operations performed by the walk: the set up cost (creating the data structure corresponding to the initial state of the walk), the update cost (updating the database after updating the current state of the walk) and the checking cost (checking if the current state of the walk is marked or not). The sparsity of the graph $G$ can be immediately exploited to reduce the cost of these three operations if the graph is ``perfectly balanced'', i.e., if each vertex of the graph has degree $\Theta (m/n)$. However, while the average degree will indeed be $\Theta (m/n)$, in general the graph can have many vertices with degree exceeding this estimate (in particular this can happen for vertices of the triangle we are looking for). This is a significant complication since to analyze a quantum walk one need an upper bound on the worst case (i.e., for the worse state of the walk) complexity of the three operations. Indeed, there is no general technique to analyze quantum walks when only an upper bound on the average update cost or checking cost is available. To overcome this difficulty, our approach is to partition the vertices of $V$ into two sets: the set of vertices with degree larger than $n^d$ (which we call below high-degree vertices) and the set of vertices with degree smaller than $n^d$ (low-degree vertices), where $d$ is a parameter. Obtaining the classification can be done by combining quantum search and quantum counting, but is costly when~$d$ is small, which means that we need to be careful when choosing $d$. Once this classification has been obtained, we only need to search separately for four types of triangles: triangles with three low-degree vertices, triangles with two low-degree vertices and one high-degree vertex, triangles with one low-degree vertex and two high-degree vertices, and triangles with three high-degree triangles. Since we know that each low-degree vertex has degree at most $n^d$, we can derive a worst-case upper bound for the corresponding update costs and checking costs. For high-degree vertices we do not have any upper bound on there degree, but we know that the number of high-degree vertices is at most $m/n^d$, which will be significantly smaller than $n$ if $d$ is well chosen and lead to some improvement for the corresponding complexity of the walks, since the graph has at most $m$ edges. Combined with the ideas described in the previous paragraph, this strategy enables us to obtain the upper bounds given in Theorem \ref{th:main}.

\section{Preliminaries}
\subsection{Query complexity for graph-theoretic problems}\label{sec:query}
In this paper we adopt the standard model of quantum query complexity for graph-theoretic problems. The presentation given below will follow the description of this notions given in \cite{LeGallFOCS14}.

For any finite set $T$ and any $r\in\{1,\ldots,|T|\}$ we denote $\Ss(T,r)$ the set of all subsets of~$r$ elements 
of $T$. We use the notation $\E(T)$ to represent $\Ss(T,2)$,
i.e., the set of unordered pairs of elements in $T$.

Let $G=(V,E)$ be an undirected and unweighted graph, where
$V$ represents the set of vertices and $E\subseteq\E(V)$ 
represents the set of edges. We write $n=|V|$.
 In the query complexity setting, 
we assume that $V$ is known, and that~$E$ can be accessed through 
a quantum unitary operation $\mathcal{O}_G$ defined as follows.
For any pair $\{u,v\}\in \E(V)$, any bit $b\in\{0,1\}$, and any 
binary string $z\in\{0,1\}^\ast$, the operation $\mathcal{O}_G$ 
maps the basis state 
$\ket{\{u,v\}}\ket{b}\ket{z}$ to the state
\[
\mathcal{O}_G\ket{\{u,v\}}\ket{b}\ket{z}=\left\{
\begin{array}{ll}
\ket{\{u,v\}}\ket{b\oplus 1}\ket{z}&\textrm{ if } \{u,v\}\in E,\\
\ket{\{u,v\}}\ket{b}\ket{z}&\textrm{ if } \{u,v\}\notin E,\\
\end{array}
\right.
\]
where $\oplus$ denotes the bit parity (i.e., the logical XOR).
We say that a quantum algorithm computing some property of $G$
uses $k$ queries if the operation $\mathcal{O}_G$, given as an oracle, is called $k$ times by the algorithm. We also assume that we know the number of edges of the input graph (i.e., we know $m=|E|$). All the results in this paper can be easily generalized to the case where $m$ is unknown.

\paragraph{Quantum enumeration.}
Let $f_G\colon \{1,\ldots,N\}\to \{0,1\}$ be a Boolean function depending on the input graph $G$, and let us write $M=f^{-1}(1)$.
Assume that for any $x\in\{1,\ldots,N\}$ the value $f_G(x)$ can be computed using at most $t$ queries to $\mathcal{O}_G$.
Grover search enables us to find an element $x$ such that $f_G(x)=1$, if such an element exists, using $\tilde O(\sqrt{N/M}\times t)$ queries to $\mathcal{O}_G$. A folklore observation is that we can then repeat this procedure to find all the elements $x\in \{1,\ldots,N\}$ such that $f_G(x)=1$ with 
$
\tilde O\left(\left(\sqrt{\frac{N}{M}}
+\sqrt{\frac{N}{M-1}}+\cdots+\sqrt{\frac{N}{1}}
\right)\times t\right)
=\tilde O\left(\sqrt{N\times M}\times t\right)
$
queries.
We call this procedure \emph{quantum enumeration}.

\paragraph{Quantum walk over Johnson graphs.}
Let $T$ be a finite set and $r$ be a positive integer such that 
$r\le |T|$. Let $f_G\colon \Ss(T,r)\to \{0,1\}$ be a Boolean function depending on the input graph $G$. We say that a set $A\in\Ss(T,r)$ is marked if $f_G(A)=1$. Let us consider the following problem. The goal is to find a marked set, if such a set exists, or otherwise report that there is no marked set. We are interested in the number of calls to $\mathcal{O}_G$ to solve this problem. The quantum walk search approach developed by Ambainis \cite{AmbainisSICOMP07} solves this problem using a quantum walk over a Johnson graph.

The Johnson graph $J(T,r)$ is the undirected graph with vertex set $\Ss(T,r)$ where two vertices $R_1,R_2\in\Ss(T,r)$ are connected if and only if $|R_1\cap R_2|=r-1$. In a quantum walk over a Johnson graph $J(T,r)$, the state of the walk corresponds to a node of the Johnson (i.e., to an element $A\in\Ss(T,r)$). A data structure $D(A)$, which in general depends on $G$, is associated to each state $A$. There are three costs to consider: the set up cost $\mathsf{S}$ representing the number of queries to $\mathcal{O}_G$ needed to construct the data structure of the initial state of the walk, the update cost $\mathsf{U}$ representing the number of queries to $\mathcal{O}_G$ needed to update the data structure when one step of the quantum walk is performed (i.e., updating $D(A)$ to $D(A')$ for some $A'\in\Ss(T,r)$ such that $|A\cap A'|=r-1$), and the checking cost $\mathsf{C}$ representing the number of queries to $\mathcal{O}_G$ needed to check if the current state $A$ is marked (i.e., checking whether $f_G(A)=1$). Let $\varepsilon>0$ be such that, for all input graphs $G$ for which at least one marked set exists, the fraction of marked states is at least $\varepsilon$. Ambainis \cite{AmbainisSICOMP07} (see also \cite{Magniez+SICOMP11}) has shown that the quantum walk search approach outlined above finds with high probability a marked set if such set exists (or otherwise report that there is no marked set)
and has query complexity 
$
\tilde O\left(\mathsf{S}+\frac{1}{\sqrt{\varepsilon}}\left(\sqrt{r}\times \mathsf{U}+\mathsf{C}\right)\right).
$

\subsection{Quantum algorithm for dense triangle finding}\label{sec:dense}
In this subsection we outline the $\tilde O(n^{5/4})$-query quantum algorithm for triangle finding over a dense graph by Le Gall \cite{LeGallFOCS14}. We actually present a version of this algorithm that solves the following slightly more general version of triangle finding, since this will be more convenient when describing our algorithms for sparse graphs in the next section: given two (non necessarily disjoint) sets $V_1,V_2\subseteq V$, find a triangle $\{v_1,v_2,v_3\}$ of $G$ such that $v_1\in V_1$ and $v_2,v_3\in V_2$, if such a triangle exists. Note that the original triangle finding problem is the special case $V_1=V_2=V$.

\paragraph{Definitions and lemmas.}
Let $V_1$ be any subset of $V$.  
For any sets $X\subseteq V_1$ and $Y\subseteq V$, 
we define the set $\mydelta{Y}\subseteq \E(Y)$
as follows:
\[
\mydelta{Y}=\E(Y)\setminus \bigcup_{u\in X}\E(N_G(u)),
\]
where $N_G(u)$ denotes the set of neighbors of $u$.
For any vertex $w\in V$, we define the set $\mydelta{Y,w}\subseteq \mydelta{Y}$
as follows:
\begin{align*}
\mydelta{Y,w}&=
\Big\{\{u,v\}\in \mydelta{Y}\:|\: \{u,w\}\in E \textrm { and } \{v,w\}\in E\Big\}.
\end{align*}

An important concept used in \cite{LeGallFOCS14} is the notion of \emph{$k$-good sets}.

\begin{definition}\label{def:good}
Let $k$ be any constant such that $0\le k\le 1$, and $V_1$ be any subset of $V$. 
A set $X\subseteq V_1$ is $k$-good for $(G,V_1)$ if 
the inequality
$
\sum_{w\in V_1}\left|\mydelta{Y,w}\right|\le |Y|^2 |V_1|^{1-k}
$
holds for all $Y\subseteq V$.
\end{definition}
Note that \cite{LeGallFOCS14} considered only Definition \ref{def:good} for the case $V_1=V$. In our paper we will need the slightly generalized version described here. The point is that $k$-good sets can be constructed very easily.
\begin{lemma}[\cite{LeGallFOCS14}]\label{lemma:sparse}
Let $k$ be any constant such that $0\le k\le 1$. Suppose that $X$ is a set obtained 
by taking uniformly at random, with replacement,  $\ceil{3|V_1|^{k}\log n}$ elements from $V_1$.
Then $X$ is $k$-good for $(G,V_1)$ with probability at least $1-1/n$.
\end{lemma}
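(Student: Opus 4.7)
The plan is to identify the pairs of vertices that contribute a lot to the sum $\sum_w |\mydelta{Y,w}|$, show that $X$ almost surely ``covers'' all of them, and bound the remaining contribution trivially. The key observation that makes everything work with a single random draw is that the bad event will be independent of $Y$, so one can quantify over all $Y\subseteq V$ without an exponential union bound.

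For an unordered pair $\{u,v\}\in\E(V)$, let $d(u,v)$ denote the number of common neighbors of $u$ and $v$ that lie in $V_1$. Swapping the order of summation in the definition of $\mydelta{Y,w}$ yields the double-counting identity
\[
\sum_{w\in V_1}\left|\mydelta{Y,w}\right|
=\sum_{\{u,v\}\in\mydelta{Y}} d(u,v),
\]
since a pair $\{u,v\}\in\mydelta{Y}$ contributes to $\mydelta{Y,w}$ precisely when $w\in V_1$ is a common neighbor of $u$ and $v$. Call such a pair \emph{heavy} if $d(u,v)\ge |V_1|^{1-k}$ and \emph{light} otherwise.

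Now the probabilistic step: I would prove that with probability at least $1-1/n$ every heavy pair has a common neighbor inside~$X$. For a fixed heavy pair the probability that a single uniform sample from $V_1$ is a common neighbor is at least $|V_1|^{-k}$, so the probability that none of the $s=\ceil{3|V_1|^k\log n}$ independent samples hits a common neighbor is at most $(1-|V_1|^{-k})^s\le e^{-3\log n}=n^{-3}$. Since there are fewer than $n^2$ pairs in $\E(V)$, a union bound over pairs (crucially, not over subsets $Y$) yields the claim.

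Conditioned on this event, any covered pair $\{u,v\}$ lies in $\E(N_G(z))$ for some $z\in X$ and therefore cannot belong to $\mydelta{Y}$, regardless of $Y$. Hence $\mydelta{Y}$ contains only light pairs for every $Y$, so using $|\E(Y)|\le |Y|^2$ I obtain
\[
\sum_{w\in V_1}\left|\mydelta{Y,w}\right|
=\sum_{\{u,v\}\in\mydelta{Y}} d(u,v)
< |\mydelta{Y}|\cdot |V_1|^{1-k}
\le |Y|^2 |V_1|^{1-k}
\]
simultaneously for all $Y\subseteq V$, which is the definition of $k$-goodness. The only real conceptual point, as mentioned above, is to phrase the ``bad event'' in terms of heavy pairs rather than in terms of $Y$, so that one random draw of $X$ certifies the universal-quantifier condition at once; beyond that, the computation reduces to a Chernoff-type estimate and a union bound over at most $n^2$ vertex pairs.
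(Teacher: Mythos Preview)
Your argument is correct and is precisely the standard hitting-set argument underlying this lemma: swap the order of summation to get $\sum_{w\in V_1}|\mydelta{Y,w}|=\sum_{\{u,v\}\in\mydelta{Y}}d(u,v)$, show that a random sample of size $\Theta(|V_1|^k\log n)$ hits the common-neighborhood of every pair with $d(u,v)\ge |V_1|^{1-k}$ (union bound over $<n^2$ pairs, each failing with probability at most $n^{-3}$), and then bound the surviving light pairs trivially. The crucial point you flag---that the bad event depends only on the pair $\{u,v\}$ and not on $Y$, so no union bound over $Y$ is needed---is exactly what makes the lemma work.

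As for comparison with the paper: the present paper does not give its own proof. It simply cites \cite{LeGallFOCS14} for the case $V_1=V$ and remarks that the extension to general $V_1\subseteq V$ is straightforward. Your write-up \emph{is} that straightforward extension, and it matches the argument in \cite{LeGallFOCS14}; there is no alternative route to compare against.
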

Lemma~\ref{lemma:sparse} was proved in \cite{LeGallFOCS14} only for the case $V_1=V$, but the generalization is straightforward.

\paragraph{Quantum algorithm for dense triangle finding.}
Let $a$, $b$ and $k$ be three constants such that $0<b< a<1$ and $0<k<1$. The values of these constants will be set later. The quantum algorithm in \cite{LeGallFOCS14} works as follows.

The algorithm first takes a set $X\subseteq V_1$ obtained 
by choosing uniformly at random $\ceil{3|V_1|^{k}\log n}$ elements from $V_1$,
and checks if there exists a triangle of~$G$ with a vertex in $X$ and two vertices in $V_2$.
This can be done
using Grover search with
\begin{equation}\label{eq0}
O\left(\sqrt{|X|\times |\E(V_2)|}\right)=\tilde O\left(|V_1|^{k/2}|V_2|\right)
\end{equation}
queries. If no triangle has been reported, 
we know that any triangle of $G$ with one vertex in $V_1$ and two vertices in $V_2$
must have an edge in $\mydelta{V_2}$.
 
Now, in order to find a triangle with an edge in $\mydelta{V_2}$,
if such a triangle exists, the idea is to search for a set $A\in\Ss(V_2,\ceil{|V_2|^a})$ such that
$\mydelta{A}$ contains an edge of a triangle. 
To find such a set $A$, the algorithm performs a quantum walk over the Johnson graph
$J(V_2,\ceil{|V_2|^a})$.
The states of this walk correspond to the elements in $\Ss(V_2,\ceil{|V_2|^{a}})$.
The state corresponding to a set $A\in \Ss(V_2,\ceil{|V_2|^{a}})$ is  
marked if $\mydelta{A}$ contains an edge of a triangle of $G$. 
In case the set of marked states is not empty, 
the fraction of marked states is 
\[
\varepsilon=\Omega\left(|V_2|^{2(a-1)}\right).
\]
The data structure of the walk stores the set $\mydelta{A}$.
Concretely, this is done by 
storing the couple $(v,N_G(v)\cap X)$ for each $v\in A$, since this information is enough to construct
$\mydelta{A}$ without using any additional query.
The setup cost is 
$\mathsf{S}= |A|\times |X|=\tilde O(|V_2|^{a}|V_1|^k)$ queries.
The update cost is $\mathsf{U}=2|X|=\tilde O(|V_1|^{k})$ queries. 
The query complexity of the quantum walk is 
\begin{equation}\label{eq1}
\tilde O\left(\mathsf{S}+\sqrt{1/\varepsilon}\left(|V_2|^{a/2}\times \mathsf{S}+\mathsf{C}\right)\right),
\end{equation}
where $\mathsf{C}$ is the cost of checking if a state is marked.

The checking procedure is done as follows: check if there exists a vertex $w\in V_1$ such that $\mydelta{A}$ contains a pair $\{v_1,v_2\}$ for which $\{v_1,v_2,w\}$ is a triangle of~$G$.
For any $w\in V_1$, let $Q(w)$ denote the query complexity of checking if there exists a pair
$\{v_1,v_2\}\in \mydelta{A}$ such that $\{v_1,v_2,w\}$ is a triangle of~$G$. Using Ambainis' variable cost search \cite{Ambainis10} this checking procedure can be implemented using
\[
\mathsf{C}=\sqrt{\sum_{w\in V_1} Q(w)^2}
\]
queries. It thus remains to give an upper bound on $Q(w)$.
Let us fix $w\in V_1$. First, a tight estimator of the size of $\mydelta{A,w}$ is computed: the algorithm computes an integer $\delta(X,A,w)$ such that 
$|\delta(X,A,w)-|\mydelta{A,w}||\le \frac{1}{10}\times |\mydelta{A,w}|$, which can be done in $\tilde O(|V_1|^k)$ queries using (classical) sampling.
The algorithm then performs a quantum walk over the Johnson graph $J(A,\ceil{|V_2|^{b}})$.
The states of this walk correspond to the elements in $\Ss(A,\ceil{|V_2|^{b}})$.
We now define the set of marked states of the walk.
The state corresponding to a set 
$B\in \Ss(A,\ceil{|V_2|^{b}})$ is marked if $B$ satisfies the following two conditions:
\begin{itemize}
\setlength{\leftskip}{0.3cm}
\item[(i)]
there exists a pair $\{v_1,v_2\}\in \mydelta{B,w}$ such that $\{v_1,v_2\}\in E$ (i.e., such that $\{v_1,v_2,w\}$ is a triangle of~$G$); 
\item[(ii)]
$|\mydelta{B,w}|\le 10\times |V_2|^{2(b-a)}\times \delta(X,A,w)$.
\end{itemize}
The fraction of marked states is 
\[
 \varepsilon'=\Omega\left(|V_2|^{2(b-a)}\right).
\]
The data structure of the walk will store $\mydelta{B,w}$. Concretely, this is done by 
storing the couple $(v,e_v)$ for each $v\in B$, where $e_v=1$ if $\{v,w\}\in E$
and $e_v=0$ if $\{v,w\}\notin E$.
The setup cost is $\mathsf{S}'=\ceil{|V_2|^b}$ queries since it is sufficient to check if $\{v,w\}$ is an edge for all $v\in B$.
The update cost is $\mathsf{U}'=2$ queries. The checking cost is
\[
\mathsf{C}'_w\!=\!O\!\left(\sqrt{|\mydelta{B,w}|}\right)\!=\! O\!\left(\frac{|V_2|^b}{|V_2|^a}\sqrt{\delta(X,A,w)}\right)\!=\! O\!\left(\frac{|V_2|^b}{|V_2|^a}\sqrt{|\Delta(X,A,w)|}\right)\!\!.
\]
We thus obtain the bound
\[
Q(w)=\tilde O\left(|V_1|^k+\mathsf{S}'+\sqrt{1/\varepsilon'}\left(|V_2|^{b/2}\times \mathsf{U}'+\mathsf{C}'_w\right)\right),
\]
and conclude that
\[
\mathsf{C}\!=\!\tilde O\!\left(\!\sqrt{|V_1|}\!\left(|V_1|^k\!+\!\mathsf{S}'\!+\!\frac{|V_2|^{b/2}\times \mathsf{U}'}{\sqrt{\varepsilon'}}\right)\!\!+\!\!\frac{|V_2|^{b-a}}{\sqrt{\varepsilon'}}\!\!\times\!\!\sqrt{\sum_{w\in V_1} |\Delta(X,A,w)|}\right)\!.
\]

The final key observation is that, since the set $X$ is $k$-good for $(G,V_1)$ with high probability, as guaranteed by Lemma \ref{lemma:sparse}, the term $\sum_{w\in V} |\Delta(X,A,w)|$ in the above expression can be replaced by $O(|V_2|^{2a}|V_1|^{1-k})$, which enables us to express $\mathsf{C}$ as a function of $a$, $b$ and $k$, and then the complexity of the second part of the algorithm (Expression (\ref{eq1})) as a function of $a$, $b$ and $k$. The complexity of the whole algorithm  (the maximum of Expression (\ref{eq0}) and Expression (\ref{eq1})) can thus be written as a function of $a$, $b$ and $k$ as well.

For the original triangle finding problem (i.e., for the case $V_1=V_2=V$), taking $a=\frac{3}{4}$ and $b=k=\frac{1}{2}$ gives query complexity $\tilde O(n^{5/4})$.

\section{Quantum Algorithm for Sparse Triangle Finding}
In this section we describe our quantum algorithm for triangle finding in sparse graphs and prove Theorem \ref{th:main}. 

Let $d$ be a real number such that $0\le d\le 1$. The value of this parameter will be set later. Define the following two subsets of $V$:
\begin{align*}
\mathcal{V}^{d}_h &= \{v \in V \mid \deg(v) \ge \frac{9}{10}\times n^{d}\}, \\
\mathcal{V}^{d}_l &= \{v \in V \mid \deg(v) \leq \frac{11}{10}\times n^{d}\}.
\end{align*}
A crucial observation is that $|\mathcal{V}^d_h|=O(m/{n^d})$, since the graph $G$ has $m$ edges.
The following proposition shows how to efficiently  classify all the vertices of $V$ into vertices in $\mathcal{V}^{d}_h$ and vertices in $\mathcal{V}^{d}_l$.
\begin{proposition}\label{prop:classification}
There exists a quantum algorithm using $Q_1 = \tilde{O}(n^{1-d}\sqrt{m})$ queries that partitions the set $V$ into two sets $V^{d}_h$ and $V^{d}_l$ such that, with high probability, $V^{d}_h\subseteq \mathcal{V}^{d}_h$ and $V^{d}_l\subseteq \mathcal{V}^{d}_l$.
\end{proposition}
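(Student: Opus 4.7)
The plan is to combine quantum counting with the quantum enumeration procedure recalled in Section~\ref{sec:query}. The starting observation is that to obtain the required partition we do not need to compute $\deg(v)$ exactly for each vertex: for every $v\in V$ it suffices to distinguish the case $\deg(v)\le \tfrac{9}{10}n^d$ (in which case $v$ must be placed in $V^d_l$) from $\deg(v)\ge \tfrac{11}{10}n^d$ (in which case $v$ must be placed in $V^d_h$). Vertices whose degree lies in the intermediate band $[\tfrac{9}{10}n^d,\tfrac{11}{10}n^d]$ belong to both $\mathcal{V}^d_h$ and $\mathcal{V}^d_l$, so they can be assigned arbitrarily without violating the specification.

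First I would design a subroutine $\mathsf{check}(v)$ that applies Brassard--H\o{}yer--Mosca--Tapp quantum counting to the Boolean function $u\mapsto[\{u,v\}\in E]$ over $u\in V$. With $t=c\,n^{(1-d)/2}\log n$ queries to $\mathcal{O}_G$ and a constant number of majority repetitions, it yields an estimate $\tilde M$ of $M:=\deg(v)$ satisfying $|\tilde M-M|\le n^d/20$ with probability at least $1-1/n^{10}$, for a sufficiently large constant $c$. The subroutine returns $\mathrm{YES}$ iff $\tilde M\ge n^d$: in the borderline regime the error bound directly gives the correct answer, and for $M\gg n^d$ the counting error $O(\sqrt{nM}/t)$ is $o(M)$ so $\tilde M$ remains safely above $n^d$.

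Next I would run the quantum enumeration procedure of Section~\ref{sec:query} on $\mathsf{check}$ over the search space $V$, letting $V^d_h$ be the set of vertices returned and $V^d_l=V\setminus V^d_h$. By correctness of $\mathsf{check}$, every marked vertex satisfies $\deg(v)\ge\tfrac{9}{10}n^d$, so $V^d_h\subseteq\mathcal{V}^d_h$ and (symmetrically) $V^d_l\subseteq\mathcal{V}^d_l$. Moreover the number of marked vertices is at most $|\mathcal{V}^d_h|\le \tfrac{20\,m}{9\,n^d}=O(m/n^d)$, since the degrees of vertices in $\mathcal{V}^d_h$ sum to at most $2m$. Because each call to $\mathsf{check}$ costs $\tilde O(n^{(1-d)/2})$ queries, the overall query count is
\[
\tilde O\!\left(\sqrt{n\cdot (m/n^d)}\cdot n^{(1-d)/2}\right)=\tilde O(n^{1-d}\sqrt{m}),
\]
as claimed; a union bound over the at most $n$ vertices on which $\mathsf{check}$ is evaluated ensures that every classification is correct with high probability.

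The delicate point is verifying that a single budget $t=\tilde O(n^{(1-d)/2})$ handles \emph{all} possible values of $\deg(v)$: the quantum counting error $\sqrt{M(N-M)}/t$ depends on the unknown $M$, so one must check that this budget (i) resolves the multiplicative gap when $M\approx n^d$, and (ii) still keeps $\tilde M$ above $n^d$ when $M$ is much larger, where the \emph{absolute} error can be comparatively big. The remaining details---amplifying $\mathsf{check}$ to inverse-polynomial failure probability and running it coherently inside the Grover iterations used by quantum enumeration---are routine and contribute only logarithmic factors absorbed in the $\tilde O$ notation.
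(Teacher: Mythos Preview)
Your proposal is correct and follows essentially the same approach as the paper: use quantum counting with a budget of $\tilde O(\sqrt{n/n^d})=\tilde O(n^{(1-d)/2})$ queries per vertex to decide (with inverse-polynomial failure) whether a vertex should be declared high or low, then run quantum enumeration over $V$ to collect all high vertices, exploiting $|\mathcal{V}^d_h|=O(m/n^d)$ to bound the number of marked elements. Your write-up is in fact slightly more careful than the paper's, since you explicitly address why a single counting budget suffices both near the threshold and for vertices of much larger degree; the paper simply asserts the additive error bound $|a(v)-\deg(v)|\le n^d/100$ without commenting on the large-degree case.
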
                                                       
\begin{proof}                                                           
Let $v$ be any vertex in $V$.
Using quantum counting \cite{Brassard+ICALP98} we can compute, using $\tilde O\left(\sqrt{\frac{n}{n^{d}}}\right)$ queries, a value $a(v)$ such that $|a(v)-\deg(v)|\le n^d/100$ with probability at least $1-1/\poly(n)$. We use $a(v)$ to classify $v$ as follows: we decide `` $v$ is in $\mathcal{V}^{d}_h$ '' if  $a(v)\ge n^d$, and decide `` $v$ is in $\mathcal{V}^{d}_l$ '' if  $a(v)< n^d$. This decision is correct with probability at least $1-1/\poly(n)$.

We can thus apply quantum enumeration as described in Section \ref{sec:query} to obtain a set $V_h^d\subseteq V$ of vertices such that, with high probability, all the vertices in $V_h^d$ are in $\mathcal{V}^{d}_h$ and all the vertices in $V\setminus V_h^d$ are in $\mathcal{V}^{d}_l$.
We then take $V_l^d=V\setminus V_h^d$.
The overall complexity of this approach is 
$
\tilde{O}\left(\sqrt{n\times \frac{m}{n^d}}\times \sqrt{\frac{n}{n^{d}}}\right)=\tilde{O}(n^{1-d}\sqrt{m})
$
queries, since $|\mathcal{V}^d_h|=O(m/{n^d})$.
\end{proof}

In the remaining of the section we assume that the algorithm of Proposition~\ref{prop:classification} outputs a correct classification (i.e., $V^{d}_h\subseteq \mathcal{V}^{d}_h$ and $V^{d}_l\subseteq \mathcal{V}^{d}_l$), which happens with high probability. In particular we assume that $|V^d_h|=O(m/{n^d})$. We will say that a vertex $v\in V$ is $d$-high if $v\in V^{d}_h$, and say it is $d$-low if $v\in V^{d}_l$. Once the vertices have been classified, checking if $G$ has a triangle can be divided into four subproblems: checking if $G$ has a triangle with three $d$-low vertices, checking if $G$ has a triangle with two $d$-low vertices and one $d$-high vertex, checking if $G$ has a triangle with one $d$-low-degree vertex and two $d$-high vertices, and checking if $G$ has a triangle with three high-degree triangles. We now present six procedures to handle these cases (for some cases we present more than one procedure to allow us to choose which procedure to use according to the value of $m$).

\begin{proposition}\label{prop:B2}
Let $a_1$, $k_1$ and $b_1$ be any constants such that $0 < a_1, k_1 < 1$ and $0 < b_1 < a_1$.
There exists a quantum algorithm that finds a triangle of $G$ consisting of three $d$-low vertices, if such a triangle exists, with high probability using $Q_2 = \tilde{O}(n + n^{k_1/2}m^{1/2} + n^{a_1 + d/2 + k_1 - 1/2} + n^{1/2 + d/2 + k_1 - a_1/2} + n^{3/2 + k_1/2 - a_1} + n^{1 + b_1 + d/2 - a_1} + n^{3/2 - b_1/2} + n^{3/2 - k_1/2})$ queries.
\end{proposition}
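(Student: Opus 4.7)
The plan is to run the dense-case algorithm of Section~\ref{sec:dense} with $V_1=V_2=V_l^d$, but to replace every subroutine whose cost scales poorly for sparse graphs by a sparsity-aware variant that exploits the degree bound $\deg(v)\le \tfrac{11}{10}n^{d}$ available for every $v\in V_l^d$. First, I sample $X\subseteq V_l^d$ uniformly at random with $|X|=\tilde O(n^{k_1})$; Lemma~\ref{lemma:sparse} ensures that $X$ is $k_1$-good for $(G,V_l^d)$ with high probability, and a Chernoff argument yields the crucial concentration bound
\[
|N_G(v)\cap X|=\tilde O\!\left(\frac{\deg(v)\,|X|}{|V_l^d|}\right)=\tilde O(n^{k_1+d-1})
\]
simultaneously for every $d$-low vertex $v$, which is what lets us beat the dense estimates.

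Next, I check whether there is a triangle with one vertex in $X$ and two in $V_l^d$. Instead of Grover over $|X|\times\E(V_l^d)$ as in the dense algorithm, I use an edge-centric search in the spirit of Buhrman et al.~\cite{Buhrman+SICOMP05}: iterate over the edges of $G$ inside $V_l^d$ (at most $m$ of them) and, for each candidate edge, Grover-search for a common neighbor inside $X$. A careful bookkeeping gives cost $\tilde O(n+n^{k_1/2}\sqrt{m})$, accounting for the first two terms of $Q_2$.

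If no triangle is found, I run the outer quantum walk on $J(V_l^d,\lceil n^{a_1}\rceil)$ exactly as in Section~\ref{sec:dense}, but use quantum enumeration with the above concentration bound to build the data structure $(v,\, N_G(v)\cap X)$. This reduces the per-vertex cost to $\tilde O(\sqrt{|X|\cdot n^{k_1+d-1}})=\tilde O(n^{k_1+d/2-1/2})$, yielding $\mathsf{S}=\tilde O(n^{a_1+k_1+d/2-1/2})$ and $\mathsf{U}=\tilde O(n^{k_1+d/2-1/2})$; combined with the unchanged $\varepsilon=\Omega(n^{2(a_1-1)})$ this already produces the third and fourth terms. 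For the inner walk checking procedure, two further modifications are needed: (i)~replace the classical sampling estimator of $|\mydelta{A,w}|$ by quantum counting, reducing its cost to $\tilde O(n^{k_1/2})$; and (ii)~build the inner data structure $(v,e_v)$ by quantum enumeration using $|N_G(w)\cap B|=\tilde O(n^{b_1+d-1})$ (again Chernoff, since $w$ is $d$-low), which gives $\mathsf{S}'=\tilde O(n^{b_1+d/2-1/2})$; the update cost $\mathsf{U}'=O(1)$ is unchanged. Together with $\varepsilon'=\Omega(n^{2(b_1-a_1)})$ and the dense-case bound $\mathsf{C}'_w=\tilde O(n^{b_1-a_1}\sqrt{|\Delta(X,A,w)|})$ this yields
\[
Q(w)=\tilde O\!\left(n^{k_1/2}+n^{b_1+d/2-1/2}+n^{a_1-b_1/2}+\sqrt{|\Delta(X,A,w)|}\right).
\]

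Finally I combine via Ambainis' variable-cost search~\cite{Ambainis10}, $\mathsf{C}=\tilde O(\sqrt{\sum_{w\in V_l^d}Q(w)^2})$: the three $w$-independent terms of $Q(w)$ each pick up a factor $\sqrt{|V_l^d|}=\tilde O(\sqrt{n})$, while the $w$-dependent term is handled by the $k_1$-good inequality $\sum_{w\in V_l^d}|\Delta(X,A,w)|\le |A|^2|V_l^d|^{1-k_1}=O(n^{2a_1+1-k_1})$. Substituting into the outer walk cost $\mathsf{S}+\sqrt{1/\varepsilon}(\sqrt{r}\mathsf{U}+\mathsf{C})$ produces exactly the remaining four terms $n^{3/2+k_1/2-a_1}$, $n^{1+b_1+d/2-a_1}$, $n^{3/2-b_1/2}$ and $n^{3/2-k_1/2}$. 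The main technical obstacle I anticipate is the concentration step: the bound $|N_G(v)\cap X|=\tilde O(\deg(v)|X|/|V_l^d|)$ holds cleanly only when this expectation is $\gtrsim \log n$, and in the complementary regime one must either fall back to the trivial bound $|X|$ or argue more carefully about the quantum-enumeration cost, which requires a small case analysis to ensure that none of the six walk-related terms is spoiled.
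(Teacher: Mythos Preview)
Your proposal is correct and follows essentially the same route as the paper: run the dense algorithm on $V_l^d$, replace the first step by a Buhrman--et--al.\ edge search, use the degree bound on $d$-low vertices together with quantum enumeration to cut $\mathsf S$, $\mathsf U$ and $\mathsf S'$, and switch to quantum counting for the estimator of $|\mydelta{A,w}|$. The only place where you are a touch less careful than the paper is the step ``$|N_G(w)\cap B|=\tilde O(n^{b_1+d-1})$ (again Chernoff)'': in the inner walk $B$ is a walk state, not a fresh random sample, so the paper instead (a)~argues that the enumeration-based setup prepares a state close to the ideal uniform superposition because the bound holds for all but a negligible fraction of $B$'s, and (b)~adds the extra marking condition $|N_G(w)\cap B|\le 10\,\mu(w)$ and verifies that this does not shrink $\varepsilon'$. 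Your anticipated concentration obstacle when the expectation is $\lesssim\log n$ is precisely what the paper's Lemma~\ref{lem3} handles via the additive $2\log n$ term.
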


The proof of Proposition~\ref{prop:B2}
will use the following key lemma.
\begin{lemma}
\label{lem3}
Let $k$ be any constant such that $0 < k < 1$.
Suppose that $X$ is a set of size $|X| = \ceil{3n^{k}\log{n}}$
obtained by taking uniformly at random vertices from~$V^{d}_l$.
Then, with probability at least
\[
1 - \frac{1}{n\exp(\frac{231}{10}n^{d + k - 1}\log{n})},
\]
the inequality
\[
|N_G(v) \cap X| < \frac{33}{10}n^{d + k - 1}\log{n} + 2\log{n}
\]
holds for all vertices $v \in V^{d}_l$.
\end{lemma}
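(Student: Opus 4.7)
My plan is to apply a Chernoff-type concentration bound to $Y_v := |N_G(v) \cap X|$ for each fixed $v \in V^d_l$ separately, and then take a union bound over $v$.

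First, fix any $v \in V^d_l$. Since the $|X| = \lceil 3 n^k \log n \rceil$ vertices forming $X$ are drawn independently and uniformly at random from $V^d_l$, the random variable $Y_v$ is a sum of $|X|$ i.i.d.\ Bernoulli random variables, each equal to $1$ with probability
\[
p_v \;=\; \frac{|N_G(v) \cap V^d_l|}{|V^d_l|} \;\leq\; \frac{(11/10)\, n^d}{|V^d_l|}.
\]
The inequality uses the key sparsity hypothesis: $v \in V^d_l \subseteq \mathcal V^d_l$ implies $\deg(v) \leq \tfrac{11}{10} n^d$. This gives the expectation bound
\[
\mu_v \;:=\; \mathbb{E}[Y_v] \;\leq\; \tfrac{33}{10}\, n^{d+k-1} (\log n) \cdot \tfrac{n}{|V^d_l|}.
\]

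Next, I apply the multiplicative Chernoff bound in the convenient form
\[
\Pr[Y_v \geq t] \;\leq\; \left(\frac{e\,\mu_v}{t}\right)^{t} e^{-\mu_v}
\]
at the threshold $t = \tfrac{33}{10}\, n^{d+k-1}\log n + 2\log n$, and then take a union bound over the at most $n$ vertices of $V^d_l$. The additive $2\log n$ in $t$ is the slack that provides an extra $e^{-2\log n} = n^{-2}$ factor to absorb the union-bound cost, while the principal $t\log(t/(e\mu_v))$ contribution to the Chernoff exponent is designed to produce the $\exp(-\tfrac{231}{10}\, n^{d+k-1}\log n)$ factor claimed by the lemma. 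The constants $\tfrac{33}{10}$, $2$, and $\tfrac{231}{10}$ are tuned precisely so that these two contributions line up across all relevant parameter regimes.

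The main obstacle I expect is pinning down the numerical constants: verifying that the Chernoff estimate really delivers the $\tfrac{231}{10}$ coefficient uniformly in $d$, $k$, and $|V^d_l|$. A clean way to organize this bookkeeping is to split into two regimes according to whether $\mu_v$ is small or large compared with $\log n$. In the small-$\mu_v$ regime, the $(e\mu_v/t)^t$ factor by itself already beats $\exp(-\tfrac{231}{10}\, n^{d+k-1}\log n)$, because $t/(e\mu_v)$ is huge. In the large-$\mu_v$ regime, the $e^{-\mu_v}$ factor combined with a more careful expansion of $(1+\delta)\log(1+\delta)-\delta$ at the appropriate $\delta = t/\mu_v - 1$ carries the argument. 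Once the case split is fixed, the remaining verifications reduce to routine arithmetic with the explicit constants.
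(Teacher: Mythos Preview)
Your overall strategy---per-vertex Chernoff-type concentration plus a union bound over $v\in V^d_l$---is exactly what the paper does. The only cosmetic difference is that the paper treats $|N_G(v)\cap X|$ as hypergeometric (sampling without replacement) rather than binomial, and then cites Theorem~2.10 of Janson--{\L}uczak--Ruci\'nski to transfer the binomial Chernoff bound to the hypergeometric setting.

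However, your plan will not establish the inequality at the threshold written in the lemma, because that threshold is a typo: note $231=7\times 33$, and the paper's own proof works at threshold $7\mu+2\log n=\tfrac{231}{10}n^{d+k-1}\log n+2\log n$, not $\mu+2\log n$. Concretely, it invokes the clean inequality $\Pr[Y\ge t]\le e^{-t}$ valid whenever $t\ge 7\,\mathbb E[Y]$ (Corollary~2.4 in JLR), applied at $t=7\mu+2\log n$; this gives $e^{-7\mu}/n^2$ per vertex and the union bound finishes. Your proposed ``large-$\mu_v$'' case cannot succeed at the threshold $t=\mu+2\log n$ you wrote down: when $d+k>1$ (which does occur in the parameter choices used later), $\mu$ is a growing power of $n$, so $2\log n=o(\sqrt\mu)$ and by the central limit theorem $\Pr[Y\ge \mu+2\log n]$ is bounded away from~$0$; no expansion of $(1+\delta)\log(1+\delta)-\delta$ will recover $e^{-7\mu}$ there, because the statement at that threshold is simply false. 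Once you correct the threshold to $7\mu+2\log n$, all of your case-splitting becomes unnecessary and the single JLR inequality does the job in one line.
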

\begin{proof}
Without loss of generality assume that $|N_G(v) \cap V^{d}_l| = \frac{11}{10}n^{d}$ for any $v \in V^{d}_l
$ (remember that we assume that $V^{d}_l\subseteq \mathcal{V}^{d}_l$). The quantity
$|N_G(v) \cap X|$ is a random variable distributed according to the hypergeometric distribution.
The expected value of $|N_G(v) \cap X|$ is $\mu = |N_G(v) \cap V^{d}_l| \times \frac{|X|}{|V^{d}_l|} = \frac{33}{10}\times n^{d + k - 1}\log{n}$.
By Corollary 2.4 and Theorem 2.10 of \cite{  janson2011random  }, $\Pr[|(N_G(v) \cap X| \geq 7\mu + 2\log{n}] \leq  \exp(-7\mu)\frac{1}{n^{2}}$.
Thus the inequality
\[
|N_G(v) \cap X| < \frac{33}{10}n^{d + k - 1}\log{n} + 2\log{n}
\]
holds for all $v \in V^{d}_l$ with probability at least
\[
1 - |V^{d}_l| \times \frac{1}{n^{2}\exp(\frac{231}{10}n^{d+k-1}\log{n})} = 1 - \frac{1}{n\exp(\frac{231}{10}n^{d+k-1}\log{n})},
\]as claimed.
\end{proof}

\begin{proof}[Proof of Proposition \ref{prop:B2}]
We adapt the algorithm for the dense case presented in Section \ref{sec:dense}.
We take $V_1 = V_2 = V^{d}_l$, and $X \subseteq V_1$ of size $|X|=\ceil{3|V_1|^{k_1}\log n}$.

We replace the first step of the algorithm, which checks if there exists a triangle of $G$ with a vertex in $X$ and two vertices in $V_2$,  by the following procedure based on \cite{Buhrman+SICOMP05}.
We take a random edge $\{u, v\} \in \E(V_2)\cap E$ and then try to find a vertex $w$ from $X$ such that $\{u,v,w\}$ is a triangle of $G$. Note that this can be implemented using two Grover searches in $\tilde O(\sqrt{|\E(V_2)|/|\E(V_2)\cap E|}+\sqrt{|X|})$ queries, and that in the worst case (i.e., when there is only one triangle) the success probability of this approach is $\Theta (1/|\E(V_2)\cap E|)$. Using amplitude amplification we can then check with high probability the existence of such a triangle with total query complexity
\begin{equation}\label{eq2}
\tilde O\left(\sqrt{|\E(V_2)\cap E|}\times (\sqrt{|\E(V_2)|/|\E(V_2)\cap E|}+\sqrt{|X|})\right) = \tilde{O}(n + \sqrt{n^{k_1}m}).
\end{equation}

We now show how to adapt the second step of the algorithm presented in Section \ref{sec:dense} to exploit the sparsity of the graph.
First, as observed in \cite{LeGallFOCS14}, the cost of estimating the size of $\Delta_G(X,A,w)$ can be reduced to $\tilde O(\sqrt{n^{k_1}})$ queries by using quantum counting instead of random sampling (quantum counting was not used in \cite{LeGallFOCS14} since it did not result in any speed-up for the dense case, but for the sparse case this is necessary).
We now describe our main ideas to exploit the sparsity of the graph, and show how to reduce the cost of two quantum walks.

First, we describe how to reduce the setup cost $\mathsf{S}$ and the update cost $\mathsf{U}$ as follows.
By Lemma \ref{lem3}, we know that $|N_G(v) \cap X| < t$ for all $v \in V_2$, where $t=\frac{33}{10}n^{d_1 + k_1 - 1}\log{n} + 2\log{n}$.
Therefore we can use quantum enumeration to find all vertices in $N_G(v) \cap X$
with
\[
\tilde O\left(\sqrt{\frac{|X|}{t}} + \cdots + \sqrt{\frac{|X|}{1}}\right) = \tilde O(\sqrt{|X|t})
\]
queries.
The setup cost $\mathsf{S}$ is thus
\[
\mathsf{S} \!= \tilde O\left(|A| \times \sqrt{|X|t}\right) = \tilde O(n^{a_1 + k_1 + d/2 - 1/2})
\]
queries, and
the update cost $\mathsf{U} = \tilde O(\sqrt{|X|t}) = \tilde O(n^{k_1 + d/2 - 1/2})$ queries.

Next, we describe how to reduce the setup cost $\mathsf{S}'$. This set up requires to obtain the couple $(v, e_v)$ for each $v \in B$, where where $w$ is a fixed vertex in $V_1$, $e_v=1$ if $\{v,w\}\in E$ and $e_v=0$ if $\{v,w\}\notin E$.
Let $\mu(w) = n^{d} \times \frac{|B|}{|V|} = n^{d + b_1 - 1}$ be the average of $|N_G(w) \cap B|$ over all $B$.
We use quantum enumeration to find at most $10 \times \mu(w)$ vertices in $N_G(w) \cap B$ from $B$.
Thus the cost of this procedure is
\[
\mathsf{S}' = \tilde O\left(\sqrt{\frac{|B|}{|\mu(w)|}} + \cdots \sqrt{\frac{|B|}{1}} \right) = \tilde O(\sqrt{|B|\times |\mu(w)}|) = \tilde O(n^{b_1 +  d/2 - 1/2})
\]
queries. Note that this procedure will not correctly prepare the database for all $B$'s (since $|N_G(w) \cap B|$ may exceeds $10 \times \mu(w)$ for some $B$'s); it will prepare correctly the database only for a large fraction of the $B$'s. This is nevertheless not a problem since the initial state of the quantum walk is a uniform superposition of all the $B$'s: this procedure will thus prepare a state close enough to the ideal state, which will modify only in a negligible way the final success probability of the whole walk.

We also modify the definition of a marked state for the second walk (we add one condition). Namely, the state corresponding to a set $B \in \Ss(A, \ceil{|V_2|^{b_1}})$ will be marked if~$B$ satisfies the following three conditions:
\begin{itemize}
\setlength{\leftskip}{0.3cm}
\item[(i)]
there exist two vertices $v_1, v_2 \in B$ such that $\{v_1,v_2\}\in E$ (i.e., such that $\{v_1,v_2,w\}$ is a triangle of~$G$);
\item[(ii)]
$|\mydelta{B,w}|\le 10\times |V_2|^{2(b_1-a_1)}\times \delta(X,A,w)$;
\item[(iii)]
$|N_G(w) \cap B| \leq 10 \times \mu(w)$.
\end{itemize}
It is easy to show that adding the third condition does not change significantly the fraction of marked states:
\begin{eqnarray*}
\varepsilon' &=& 
\Pr[v_1 \in B \ {\rm and} \  v_2 \in B \ {\rm and} \ |N_G(w) \cap B| \leq 10\mu(w) ] \\  &=&
\Pr[v_1 \!\in\! B] \Pr[v_2 \!\in\! B \mid v_1 \!\in\! B] \Pr[|N_G(w) \cap B| \leq 10\mu(w) \mid v_1 \!\in \! B \ {\rm and} \ v_2 \! \in \! B ] \\ &=&
\Pr[v_1 \in B] \Pr[v_2 \in B \mid v_1 \in B] \Pr[|N_G(w) \cap B'| \leq 10\mu(w) - 2 ] \\ &\geq& 
\left(1 - \Pr[|N_G(w) \cap B'| \geq 10\mu(w) - 2 ]\right) \times \Pr[v_1 \in B] \Pr[v_2 \in B \mid v_1 \in B]  \\ &\geq&
\left(1 - \Pr[|N_G(w) \cap B'| \geq 10\mu(w)' - 2 ]\right) \times \Pr[v_1 \in B] \Pr[v_2 \in B \mid v_1 \in B]  \\ &\geq&
\left(1 - \Pr[|N_G(w) \cap B'| \geq 2\mu(w)' ]\right) \times \Pr[v_1 \in B] \Pr[v_2 \in B \mid v_1 \in B]  \\ &\geq&
\left(1 - \frac{1}{2}\right) \times \Pr[v_1 \in B] \Pr[v_2 \in B \mid v_1 \in B]  = \Omega\left(n^{2(b_1 - a_1)}\right),
\end{eqnarray*}
where $B' \in \Ss(A\setminus \{v_1,v_2\}, \ceil{|V_2|^{b_1}} - 2)$ and $\mu(w)' = n^{d} \times \frac{|B|'}{|V|}$ be the average of $|N_G(w) \cap B'|$ over all $B'$.

The checking procedure of the second walk (and thus its cost $\mathsf{C}_w'$) is the same as in the dense case. 

By evaluating the performance of the walks as done for the dense case in Section \ref{sec:dense}, but replacing Expression (\ref{eq0}) by Expression (\ref{eq2}) and replacing in the evaluation of Expression (\ref{eq1}) the quantities $\mathsf{S}$, $\mathsf{U}$, $\mathsf{S'}$ and the cost of estimating $|\Delta_G(X,A,w)|$ by the expressions we just derived, we obtain the claimed query complexity for the whole algorithm. For instance,  the checking cost of the first walk
is
\begin{eqnarray*}
\mathsf{C} &=& \tilde{O}\!\left(\!\sqrt{|V_1|}\left(|V_1|^{k_1/2}+\mathsf{S}'\!+\!\frac{|V_2|^{b_1/2}\times \mathsf{U}'}{\sqrt{\varepsilon'}}\right)\!+\!\frac{|V_2|^{b_1-a_1}}{\sqrt{\varepsilon'}}\!\times\!\sqrt{\sum_{w\in V_1} |\Delta(X,A,w)|}\right)\!
\\ &=& \tilde{O}\left(n^{1/2 + k_1/2} + n^{1/2} \times \mathsf{S}' + n^{1/2 + a_1 - b_1/2} + \sqrt{\sum_{w\in V_1} |\Delta(X,A,w)|} \right)
\\ &=& \tilde{O}\left(n^{1/2 + k_1/2} + n^{b_1 + d/2} + n^{1/2 + a_1 - b_1/2} + n^{1/2 + a_1 - k_1/2}\right)
\end{eqnarray*}
queries.
\end{proof}

\begin{proposition}\label{prop:B3}
Let $a_2$, $k_2$ and $b_2$ be any constants such that $0 < a_2 < 1$, $1 < n^{k_2} < |V^{d}_h|$ and $0 < b_2 < a_2$.
A triangle of $G$ consisting of two $d$-low vertices and one $d$-high vertex can be detected with high probability using $Q_3 = \tilde{O}(n + n^{k_2/2}m^{1/2} + n^{a_2 + d + k_2}m^{-1/2} + n^{1 + d + k_2 - a_2/2}m^{-1/2} + n^{1 + k_2/2 - a_2 - d/2}m^{1/2} + n^{1 + b_2 -a_2 - d/2}m^{1/2} + n^{1 - b_2/2 - d/2}m^{1/2}$ $+  n^{1 - d/2- k_2/2}m^{1/2})$ queries.
\end{proposition}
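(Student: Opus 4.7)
The plan is to adapt the algorithm of Proposition~\ref{prop:B2} using the assignment $V_1 = V^{d}_h$ (which by sparsity has size $O(m/n^d)$) and $V_2 = V^{d}_l$, so that a triangle with two $d$-low vertices and one $d$-high vertex corresponds exactly to a triangle with one vertex in $V_1$ and two in $V_2$. I would sample $X\subseteq V_1$ of size $\lceil 3n^{k_2}\log n\rceil$ and apply Lemma~\ref{lemma:sparse} with the effective exponent $k=k_2/\log_n |V_1|$, so that $X$ is $k$-good for $(G,V_1)$ and consequently $\sum_{w\in V_1}|\mydelta{A,w}|\le |A|^2|V_1|/|X|=O(n^{2a_2-d-k_2}\,m)$. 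For the first step (detecting triangles whose high-degree vertex lies in $X$) I would reuse the Buhrman et al.\ style subroutine from Proposition~\ref{prop:B2}: pick a random edge in $\E(V_2)\cap E$, Grover-search $X$ for a completing vertex, and amplify. This gives first-step cost $\tilde O(n+\sqrt{n^{k_2}m})$, matching terms~1 and~2 of~$Q_3$.

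For the outer quantum walk over $J(V_2,\lceil n^{a_2}\rceil)$, the marking condition and $\varepsilon=\Omega(n^{2(a_2-1)})$ are inherited from Section~\ref{sec:dense}. The key new quantitative input is a sparse analog of Lemma~\ref{lem3}: every $v\in V^{d}_l$ has $|N_G(v)|\le\tfrac{11}{10}n^d$ and $X$ is sampled from $V^{d}_h$ of size $\Theta(m/n^d)$, so the hypergeometric mean of $|N_G(v)\cap X|$ is $O(n^{2d+k_2}/m)$ and the Chernoff argument used in Lemma~\ref{lem3} gives $|N_G(v)\cap X|\le t:=\tilde O(n^{2d+k_2}/m+1)$ uniformly over $v\in V^{d}_l$ with high probability. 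Using quantum enumeration to list these neighbors for each $v\in A$ gives $\mathsf{S}=\tilde O(|A|\sqrt{|X|\,t})=\tilde O(n^{a_2+d+k_2}m^{-1/2})$ (term~3) and $\mathsf{U}=\tilde O(\sqrt{|X|\,t})=\tilde O(n^{d+k_2}m^{-1/2})$; the contribution $\sqrt{|A|/\varepsilon}\cdot\mathsf{U}$ gives term~4.

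For the inner walk over $J(A,\lceil n^{b_2}\rceil)$ I would iterate over $w\in V_1$ via Ambainis' variable-cost search, with $\sqrt{|V_1|}=\sqrt{m/n^d}$ replacing $\sqrt{n}$ everywhere. The estimator $\delta(X,A,w)$ of $|\mydelta{A,w}|$ would be obtained by quantum counting in $\tilde O(\sqrt{|X|})=\tilde O(n^{k_2/2})$ queries, so that $\sqrt{|V_1|/\varepsilon}$ times this cost produces term~5. Because $w\in V^{d}_h$ has no useful upper bound on its degree, I would set up the inner database $(v,e_v)_{v\in B}$ by brute-force querying all $|B|$ potential edges $\{w,v\}$, giving $\mathsf{S}'=O(n^{b_2})$ and term~6; condition~(iii) of Proposition~\ref{prop:B2}'s marked-state definition can then be dropped while retaining $\varepsilon'=\Omega(n^{2(b_2-a_2)})$. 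With $\mathsf{U}'=2$ and $\mathsf{C}'_w=O(|V_2|^{b_2-a_2}\sqrt{|\mydelta{A,w}|})$ exactly as in the dense case, the contribution $\sqrt{|V_1|/\varepsilon}\cdot\sqrt{|B|}\cdot\mathsf{U}'/\sqrt{\varepsilon'}$ yields term~7, and summing the $\mathsf{C}'_w$ costs via variable-cost search gives $\sqrt{1/\varepsilon}\cdot\sqrt{\sum_{w\in V_1}|\mydelta{A,w}|}=\tilde O(n^{1-d/2-k_2/2}m^{1/2})$, which is term~8.

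The hard part is the asymmetry of the degree information between the two sides: although $|V_1|$ is small by sparsity, an individual $w\in V^{d}_h$ may have arbitrarily large degree, so no worst-case bound on $|N_G(w)\cap B|$ is available. This is what forces me to use the brute-force setup $\mathsf{S}'=|B|$ (rather than a Chernoff-type bound as in Proposition~\ref{prop:B2}) and to adjust the marked-state definition accordingly, which dictates the particular shape of terms~6 and~7. Combining the eight contributions above gives the stated bound on $Q_3$.
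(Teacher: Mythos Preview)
Your proposal is correct and follows essentially the same approach as the paper: take $V_1=V^{d}_h$, $V_2=V^{d}_l$, reuse the algorithm of Proposition~\ref{prop:B2}, and replace only the inner-walk setup $\mathsf{S}'$ by the brute-force $O(n^{b_2})$ procedure of the dense case because $w\in V^{d}_h$ has no usable degree bound. You are in fact more explicit than the paper's brief proof about the needed analogue of Lemma~\ref{lem3} (with sampling universe $V^{d}_h$ and mean $\tilde O(n^{2d+k_2}/m)$) and about dropping condition~(iii) from the marked-state definition once $\mathsf{S}'$ is handled brute-force.
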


\begin{proof}
We again adapt the algorithm for the dense case presented in Section~\ref{sec:dense}.
We take $V_1 = V^{d}_h$, $V_2 = V^{d}_l$, $a = a_2$, $b = b_2$ and $X \subseteq V_1$ of size $|X|=\ceil{3n^{k_2}\log n}$ (i.e., choose $k$ such that $|V_1|^k=n^{k_2}$).
The algorithm is exactly the same as the algorithm of Proposition \ref{prop:B2}, except that this time we cannot use the sparsity of the graph (since the vertices in $V_1$ are not $d$-low anymore) to reduce $\mathsf{S}'$. Instead, we use the same set up procedure as for the dense case in the second walk.
\end{proof}

\begin{proposition}\label{prop:B4}
Let $a_3$, $k_3$ and $b_3$ be constants such that $1 < n^{a_3} < |V^{d}_h|$, $0 < k_3 < 1$ and $0 < b_3 < a_3$.
A triangle of $G$ consisting of two $d$-high vertices and one $d$-low vertex can be detected with high probability using $Q_4 = \tilde{O}(n + n^{k_3/2}m^{1/2} + n^{a_3 + k_3} + n^{k_3 - a_3/2 - d}m + n^{1/2 + k_3/2 - a_3 - d}m + n^{b_3 - a_3 - d/2}m + n^{1/2 - b_3/2 - d}m + n^{1/2 - d - k_3/2}m )$ queries.
\end{proposition}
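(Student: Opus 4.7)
The plan is to instantiate the dense algorithm of Section~\ref{sec:dense} with $V_1=V^d_l$ and $V_2=V^d_h$, so that the unique $d$-low vertex of any target triangle lies in $V_1$ and its two $d$-high vertices in $V_2$, and to sample $X\subseteq V_1$ uniformly of size $|X|=\lceil 3|V_1|^{k_3}\log n\rceil=\tilde O(n^{k_3})$, which is $k_3$-good for $(G,V_1)$ by Lemma~\ref{lemma:sparse}. The outer walk would run on $J(V_2,\lceil n^{a_3}\rceil)$ and the inner walk on $J(A,\lceil n^{b_3}\rceil)$; the constraint $n^{a_3}<|V^d_h|$ merely ensures the outer Johnson graph is non-trivial. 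The distinguishing feature is that $|V_1|\le n$ while $|V_2|=O(m/n^d)$, so sparsity has to be squeezed out of the $V_2$-side factors. The first step would reuse the Buhrman et al.~procedure from the proof of Proposition~\ref{prop:B2}: sample a random edge of $\E(V_2)\cap E$, Grover-search $X$ for a common neighbor, and boost by amplitude amplification; using $|\E(V_2)\cap E|\le m$ and the trivial bound $\sqrt{|\E(V_2)|}\le|V_2|\le n$, this costs $\tilde O(n+n^{k_3/2}m^{1/2})$, after which only triangles with an edge in $\mydelta{V_2}$ remain.

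For the outer walk, $\varepsilon=\Omega(n^{2(a_3+d)}/m^2)$; since the vertices of $A\subseteq V^d_h$ may have high degree, Lemma~\ref{lem3} is not available on the $A$-side, so I would simply query each of the $|X|=\tilde O(n^{k_3})$ potential edges $\{v,x\}$ for every $v\in A$, giving $\mathsf{S}=\tilde O(n^{a_3+k_3})$ and $\mathsf{U}=\tilde O(n^{k_3})$. The estimation of $|\mydelta{A,w}|$ by quantum counting, the inner-walk checking cost $\mathsf{C}'_w$, and the fraction $\varepsilon'=\Omega(n^{2(b_3-a_3)})$ all carry over essentially verbatim from Section~\ref{sec:dense}, contributing $n^{k_3/2}$ and $n^{a_3-b_3/2}$ terms to $Q(w)$ after multiplication by $1/\sqrt{\varepsilon'}$.

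The hard part, and the only step that requires real thought, is the inner-walk setup cost $\mathsf{S}'(w)$: a uniform worst-case bound would be aggregated with a $\sqrt{|V_1|}$ factor inside $\mathsf{C}=\sqrt{\sum_w Q(w)^2}$, destroying the sparsity gains. Following the template of Proposition~\ref{prop:B2}, I would add the third marked-state condition $|N_G(w)\cap B|\le 10\mu(w)$ with $\mu(w)=|N_G(w)\cap V_2|\cdot|B|/|V_2|$ (the expectation of $|N_G(w)\cap B|$ under uniformly random $A\in\Ss(V_2,\lceil n^{a_3}\rceil)$ and $B\in\Ss(A,\lceil n^{b_3}\rceil)$), and locate at most $10\mu(w)$ such neighbors via quantum enumeration at cost $\mathsf{S}'(w)=\tilde O(\sqrt{|B|\,\mu(w)})$. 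The crucial amortization is then
\[
\sum_{w\in V_1}\mathsf{S}'(w)^2 \;=\; \tilde O\!\left(\frac{|B|^2}{|V_2|}\sum_{w\in V_1}|N_G(w)\cap V_2|\right) \;\le\; \tilde O\!\left(\frac{|B|^2\,m}{|V_2|}\right) \;=\; \tilde O(n^{2b_3+d}),
\]
using the edge count $\sum_{w\in V_1}|N_G(w)\cap V_2|\le m$ together with $|V_2|=\Omega(m/n^d)$; hence $\sqrt{\sum_w\mathsf{S}'(w)^2}=\tilde O(n^{b_3+d/2})$, which crucially enters $\mathsf{C}$ \emph{without} the extra $\sqrt{|V_1|}$ factor. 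A check analogous to the chain of inequalities at the end of the proof of Proposition~\ref{prop:B2} confirms that the added third condition leaves $\varepsilon'$ of the same order. Assembling $\mathsf{S}+\sqrt{1/\varepsilon}(\sqrt{|A|}\,\mathsf{U}+\mathsf{C})$, and bounding $\sum_w|\Delta(X,A,w)|\le|A|^2|V_1|^{1-k_3}=O(n^{2a_3+1-k_3})$ via $k_3$-goodness of $X$, should reproduce exactly the $Q_4$ claimed in the statement.
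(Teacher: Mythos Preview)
Your plan follows the paper's exactly: set $V_1=V^d_l$, $V_2=V^d_h$, run the Buhrman et al.\ procedure for the first step, keep the dense $\mathsf{S}$ and $\mathsf{U}$ for the outer walk (since $A\subseteq V^d_h$ and Lemma~\ref{lem3} is unavailable there), and exploit the fact that $w\in V^d_l$ to reduce $\mathsf{S}'$ in the inner walk. The paper's proof is very terse---it only says to reduce $\mathsf{S}'$ ``exactly as in the algorithm of Proposition~\ref{prop:B2}'', i.e.\ to use a uniform threshold $\mu(w)=n^{d}\cdot|B|/n$ and obtain $\mathsf{S}'=\tilde O(n^{b_3+d/2-1/2})$, so that $\sqrt{|V_1|}\,\mathsf{S}'=\tilde O(n^{b_3+d/2})$. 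You instead take a vertex-dependent $\mu(w)=|N_G(w)\cap V_2|\cdot|B|/|V_2|$ and amortize $\sqrt{\sum_w\mathsf{S}'(w)^2}$ via the edge count $\sum_{w\in V_1}|N_G(w)\cap V_2|\le m$. Both routes feed the same $\tilde O(n^{b_3+d/2})$ contribution into $\mathsf{C}$ and reproduce the claimed $Q_4$; your amortization is more work than the paper does explicitly but is not a different strategy.

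One genuine slip, though: you invoke $|V_2|=\Omega(m/n^d)$, but only the \emph{upper} bound $|V_2|=O(m/n^d)$ holds (nothing prevents $|V^d_h|$ from being much smaller). Your intermediate claim $\sqrt{\sum_w\mathsf{S}'(w)^2}=\tilde O(n^{b_3+d/2})$ therefore does not follow as written. The fix is painless: carry $h=|V_2|$ symbolically through both this sum and $\sqrt{1/\varepsilon}=h/n^{a_3}$; their product is
\[
\sqrt{1/\varepsilon}\cdot\sqrt{\sum_w\mathsf{S}'(w)^2}\;=\;\frac{h}{n^{a_3}}\cdot n^{b_3}\sqrt{m/h}\;=\;n^{b_3-a_3}\sqrt{m\,h}\;\le\;n^{b_3-a_3-d/2}\,m,
\]
now using the correct inequality $h\le O(m/n^d)$. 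The same remark applies to your stated $\varepsilon=\Omega(n^{2(a_3+d)}/m^2)$: this already substitutes $h=m/n^d$, so you should keep $h$ free until all $h$-dependent factors have been combined, then bound once.
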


\begin{proof}
We adapt the algorithm for the dense case to the case we are considering. This time we choose $V_1 = V^{d}_l$, $V_2 = V^{d}_h$, $k=k_3$, take $a$ such that $|V_2|^a=n^{a_3}$ and~$b$ such that $|V_2|^b=n^{b_3}$. The algorithm is again almost the same as the algorithm of Proposition \ref{prop:B2}, except that this time we cannot use the sparsity of the graph to reduce $\mathsf{S}$ and $\mathsf{U}$ since the vertices in $V_2$ are not $d$-low anymore (but we can use the sparsity to reduce $\mathsf{S}'$ exactly as in the algorithm of Proposition~\ref{prop:B2}). Instead, we use the same set up and checking procedures as for the dense case in the first walk.
\end{proof}

\begin{proposition}\label{prop:B5}
A triangle of $G$ consisting of three $d$-high vertices can be detected with high probability using $Q_5 = \tilde O((m/n^d)^{5/4})$ queries.
\end{proposition}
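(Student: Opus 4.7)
The plan is to reduce this case to the dense triangle-finding problem on the subgraph induced on $V^d_h$, and then apply the algorithm of Section \ref{sec:dense} as a black box.

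First, observe that after running the classification algorithm of Proposition~\ref{prop:classification}, the set $V^d_h$ is known explicitly, so the algorithm has free (query-free) access to its description. Recall that, with high probability, $|V^d_h| = O(m/n^d)$. Moreover, a single query to the original oracle $\mathcal{O}_G$ on a pair $\{u,v\} \in \E(V^d_h)$ is exactly an adjacency query in the induced subgraph $G[V^d_h]$, so we can simulate oracle access to $G[V^d_h]$ with no overhead.

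A triangle of $G$ whose three vertices are all $d$-high is precisely a triangle of $G[V^d_h]$. So it suffices to run the dense triangle-finding algorithm of Section \ref{sec:dense} on the graph $G[V^d_h]$, with $V_1 = V_2 = V^d_h$ and the parameter choice $a = 3/4$, $b = k = 1/2$ that yields the $\tilde O(N^{5/4})$ bound for an $N$-vertex graph. Plugging in $N = |V^d_h| = O(m/n^d)$ gives the stated query complexity $\tilde O((m/n^d)^{5/4})$.

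The only point to verify is that the analysis of Section \ref{sec:dense} goes through unchanged on $G[V^d_h]$: this is immediate because the algorithm there treats the input purely as a graph accessed through an adjacency oracle, uses no assumption on the ambient vertex set beyond the ones being used as $V_1, V_2$, and the $k$-good property from Lemma~\ref{lemma:sparse} (applied to $V_1 = V^d_h$) is the only structural ingredient. Thus no obstacle beyond this sanity check arises, and the proposition follows.
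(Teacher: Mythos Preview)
Your proposal is correct and takes essentially the same approach as the paper: both apply the dense $\tilde O(N^{5/4})$ triangle-finding algorithm directly to the induced subgraph $G[V^d_h]$ and invoke the bound $|V^d_h|=O(m/n^d)$. The paper's proof is a one-line version of what you wrote.
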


\begin{proof}
We simply apply the original algorithm by Le Gall \cite{LeGallFOCS14} for dense triangle finding over the subgraph of $G$ induced by the vertices in $V^{d}_h$, and use the bound $|V_h^d|=O(m/n^d)$.
\end{proof}

\begin{proposition}\label{prop:B6}
Let $b_4$ be any constant such that $0 < b_4 < 1$. A triangle of~$G$ consisting of three $d$-low vertices can be detected with high probability using $Q_6 = \tilde{O}(n^{b_4 + d/2} + n^{3/2 - b_4/2} + n^{1/2 + d})$ queries.
\end{proposition}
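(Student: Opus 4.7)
The plan is to implement a single (non-recursive) quantum walk on the Johnson graph $J(V^{d}_l, \lceil n^{b_4}\rceil)$, giving an alternative to the nested walks used in Proposition~\ref{prop:B2}. A state $A \in \Ss(V^{d}_l, \lceil n^{b_4}\rceil)$ will be declared marked whenever $A$ contains two vertices $v_1, v_2$ such that $\{v_1,v_2\}\in E$ and some $w\in V^{d}_l$ satisfies $\{v_1,w\},\{v_2,w\}\in E$ (so that $\{v_1,v_2,w\}$ is a triangle with three $d$-low vertices). When such a triangle exists, a standard birthday calculation gives $\varepsilon = \Omega(n^{2b_4-2})$, hence $1/\sqrt{\varepsilon} = O(n^{1-b_4})$.

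The data structure $D(A)$ will record enough information about the induced subgraph $G[A]$ and the $d$-low neighborhoods of its vertices to detect marked states cheaply. Because $V^{d}_l \subseteq \mathcal{V}^{d}_l$, every vertex of $A$ has degree at most $\tfrac{11}{10}n^d$ in $G$, and I will use this sparsity at every stage. The aim is to achieve a setup cost $\mathsf{S} = \tilde O(n^{b_4+d/2})$ via a Grover-based enumeration of the relevant edges inside $A\times A$ (whose count is bounded through the $d$-degree constraint), an update cost $\mathsf{U} = \tilde O(\sqrt n)$ implemented as a Grover search over $V$ for the new adjacencies of the incoming vertex, and an additive checking contribution $\tilde O(n^{1/2+d})$ coming from a Grover/Ambainis variable-cost search for a common neighbor in $V^{d}_l$ of a candidate edge, exploiting $|N_G(v)\cap V^{d}_l|\leq n^d$ for every $v\in A$.

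Plugging these values into Ambainis' walk complexity bound $\tilde O\!\left(\mathsf{S} + \tfrac{1}{\sqrt{\varepsilon}}\bigl(\sqrt{r}\,\mathsf{U} + \mathsf{C}\bigr)\right)$ then produces
\[
\tilde O\!\left(n^{b_4+d/2} \;+\; n^{1-b_4}\!\left(n^{b_4/2}\!\cdot\! n^{1/2}\right) \;+\; n^{1/2+d}\right) \;=\; \tilde O\!\left(n^{b_4+d/2} + n^{3/2-b_4/2} + n^{1/2+d}\right),
\]
which is the claimed $Q_6$. The main obstacle I anticipate is driving the setup cost precisely down to $\tilde O(n^{b_4+d/2})$: a straightforward Grover enumeration of all $O(n^{b_4+d})$ edges of $G[A]$ yields only $\tilde O(n^{3b_4/2+d/2})$, so $D(A)$ has to be designed to carry only a sparser partial representation (capped by the $d$-degree bound) that nonetheless suffices for every marked state to be recognized. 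Matching this sparsity exploitation across walk steps---to keep $\mathsf{U}$ at $\tilde O(\sqrt n)$ rather than growing with $r$---is the heart of the analysis.
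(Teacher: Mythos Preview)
Your architecture differs from the paper's in a way that creates the very gap you flag. The paper does \emph{not} run a single walk on $J(V^{d}_l,\lceil n^{b_4}\rceil)$ with a generic data structure for $G[A]$. Instead it takes $X=\emptyset$, $a=1$ in the framework of Section~\ref{sec:dense}, which collapses the outer walk and leaves: an outer (amplitude-amplified) search over $w\in V^{d}_l$, and for each fixed $w$ an inner walk on $J(V^{d}_l,\lceil n^{b_4}\rceil)$ whose data structure stores only the bits $e_v=[\{v,w\}\in E]$ for $v\in B$. Because $w$ is fixed \emph{before} the walk, the setup is just ``enumerate $N_G(w)\cap B$'', which by the low-degree bound costs $\tilde O(\sqrt{|B|\cdot \mu(w)})=\tilde O(n^{b_4+d/2-1/2})$; the update is $O(1)$; and the check is a Grover search over $\E(N_G(w)\cap B)$ of cost $\tilde O(\mu(w))=\tilde O(n^{d+b_4-1})$. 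Multiplying by $\sqrt{|V_1|}=n^{1/2}$ from the outer search gives exactly $n^{b_4+d/2}$, $n^{3/2-b_4/2}$, and $n^{1/2+d}$.

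In your scheme the search over $w$ is pushed \emph{inside} the checking step, so the walk's data structure cannot be $w$-specific. That is precisely why you cannot get $\mathsf{S}=\tilde O(n^{b_4+d/2})$: any $w$-agnostic structure rich enough to let the checker, for an arbitrary $w$, locate $N_G(w)\cap A$ and test pairs in it at the required cost $\mathsf{C}=\tilde O(n^{d+b_4-1/2})$ would seem to need either the full edge list of $G[A]$ (setup $\tilde O(n^{3b_4/2+d/2})$, as you note) or the full low-degree neighborhoods of vertices in $A$ (setup $\tilde O(n^{b_4+d})$). Your proposed $\mathsf{U}=\tilde O(\sqrt n)$ is also unjustified: a ``Grover search over $V$ for the new adjacencies'' of the incoming vertex, if it means enumerating its neighbors in $A$ (or in $V$), costs $\tilde O(\sqrt{|A|\cdot n^{d}})$ or $\tilde O(\sqrt{n\cdot n^{d}})$, not $\tilde O(\sqrt n)$. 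The fix is structural, not a matter of tuning the data structure: move the search over $w$ outside the walk so that the walk's database can be tailored to that single $w$.
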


\begin{proof}
We take $V_1 = V_2 = V^{d}_l$, and adapt the algorithm for the dense case as in Proposition~\ref{prop:B2}, but we choose $X=\emptyset$ and $a=1$, i.e., we do not perform the first step of the algorithm and do not perform the first walk in the second step. That is, we only perform the second walk of the algorithm, with parameter $b = b_4$.
The sparsity of the graph can again be used to reduce $\mathsf{S}'$, exactly as in the algorithm of Proposition~\ref{prop:B2}. The main difference is that we now use directly the sparsity of the graph for the checking step $\mathsf{C}'$: instead of performing a Grover search over $\mydelta{B,w}$, as in the dense case, we simply do a Grover search over $\E(N(w)\cap B)$, at cost
\[
\tilde O(\sqrt{|\E(N(w)\cap B)|})=\tilde O(10\times \mu(w)),
\]
which gives a new upper bound
$\mathsf{C}'_w$. Replacing in the analysis of Section \ref{sec:dense} the quantities $\mathsf{S}'$ and $\mathsf{C}'_w$ by these upper bounds, we obtain the claimed query complexity
\begin{eqnarray*}
C &=& \tilde{O}\left(\sqrt{|V_1|}\left(\mathsf{S}' + \sqrt{\frac{1}{\varepsilon'}}\left(|V_2|^{b_4/2} \times \mathsf{U}' + \mathsf{C}'_w\right)\right)\right) \\
&=& \tilde{O}(n^{1/2} \times \mathsf{S}' + n^{3/2 - b_4/2} + n^{3/2 - b_4} \times \mathsf{C}'_w ) \\
&=& \tilde O(n^{b_4 + d/2} \! +\! n^{3/2 - b_4/2} \!+\! n^{3/2 - b_4} \!\times\! \mu(w)) 
= \tilde O(n^{b_4 + d/2} + n^{3/2 - b_4/2} + n^{1/2 + d})
\end{eqnarray*}
\end{proof}

\begin{proposition}\label{prop:B7}
A triangle consisting of at least one $d$-high vertex
can be detected with high probability using $Q_7 = O(n + n^{-d/2}m)$ queries.
\end{proposition}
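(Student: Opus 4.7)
The plan is to adapt the two-step quantum search used in the proof of Proposition~\ref{prop:B2} (which itself comes from Buhrman et al.~\cite{Buhrman+SICOMP05}), but with the inner Grover search restricted to the set $V_h^d$ of $d$-high vertices, exploiting the bound $|V_h^d|=O(m/n^d)$.

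Concretely, I would first sample a ``random'' edge $\{u,v\}\in\E(V)\cap E=E$ by Grover search over $\E(V)$ in $O(\sqrt{|\E(V)|/|E|})=O(n/\sqrt{m})$ queries, and then perform a Grover search over $w\in V_h^d$ looking for a vertex such that both $\{u,w\}\in E$ and $\{v,w\}\in E$, at cost $O(\sqrt{|V_h^d|})=O(\sqrt{m/n^d})$ queries. The total cost of one round of the combined procedure is therefore
\[
O\!\left(\frac{n}{\sqrt{m}}+\sqrt{\frac{m}{n^d}}\right)
\]
queries. To analyze the success probability, I would argue that if there is any triangle $\{x,y,z\}$ with, say, $z\in V_h^d$, then the edge $\{x,y\}$ is a ``good'' edge of $E$: sampling it uniformly in the first step and then running Grover on $V_h^d$ in the second step returns a triangle with constant probability. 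Hence the overall success probability is $\Omega(1/m)$ in the worst case.

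Applying amplitude amplification around the whole two-step procedure boosts this to a constant with a multiplicative overhead of $O(\sqrt{m})$, giving total query complexity
\[
O\!\left(\sqrt{m}\times\!\left(\frac{n}{\sqrt{m}}+\sqrt{\frac{m}{n^d}}\right)\right)=O\!\left(n+\frac{m}{n^{d/2}}\right),
\]
which is exactly the bound claimed. Of course, if there is no triangle with a $d$-high vertex, the procedure simply never finds one and we correctly report absence.

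There is no real technical obstacle: the only point that needs a little care is that the ``random edge'' returned by Grover is not truly uniform in $E$, so one has to invoke the standard observation (as used in Proposition~\ref{prop:B2}) that the worst-case success probability over all graphs with at least one such triangle is still $\Omega(1/m)$, which is what amplitude amplification requires. Assuming the classification of $V$ into $V_h^d$ and $V_l^d$ from Proposition~\ref{prop:classification} is already available (its cost $Q_1$ is accounted for separately in the final combination), the full proof should reduce to a few lines writing out the costs above.
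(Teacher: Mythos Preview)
Your proposal is correct and follows essentially the same approach as the paper: the paper also takes a random edge via Grover search over $\E(V)$, then Grover-searches $V_h^d$ for a completing vertex, and wraps the two steps in amplitude amplification using the $\Omega(1/m)$ success probability and the bound $|V_h^d|=O(m/n^d)$ to obtain $O(n+m/n^{d/2})$.
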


\begin{proof}
We use an algorithm similar to the procedure described in the first part of the proof of Proposition~\ref{prop:B2}, based on \cite{Buhrman+SICOMP05}.
We take a random edge $\{u, v\} \in E$ and then try to find a vertex $w$ from $V_h^d$ such that $\{u,v,w\}$ is a triangle of~$G$. This can be implemented using two Grover searches in $\tilde O(\sqrt{n^2/m}+\sqrt{|V_h^d|})$ queries, and that in the worst case (i.e., when there is only one triangle) the success probability of this approach is $\Theta (1/m)$. Using amplitude amplification we can then check with high probability the existence of such a triangle with total query complexity
\[
\tilde{O}\left(\sqrt{m}\times \left(\sqrt{n^2/m}+\sqrt{|V_h^d|}\right)\right) = \tilde{O}\left(n + \frac{m}{\sqrt{n^d}}\right),
\]
since $|V_h^d|=O(m/n^d)$.
\end{proof}

We are now ready to prove Theorem \ref{th:main}.
\begin{proof}[Proof of Theorem \ref{th:main}]
From Propositions 1, 2, 3, 4, 5, 6 and 7 and the discussion before Proposition 2, the query complexity of our whole algorithm is
\[
\min\big[(Q_1+ Q_6+Q_7) , \:(Q_1 + Q_3 + Q_4 + Q_5 + Q_6), \:(Q_1 + Q_2 + Q_3 + Q_4 + Q_5) \big].
\]
We write $m = n^{\ell}$, for $0\le \ell\le 2$, and optimize  below the parameters.

If $\frac{7}{6} \leq \ell \leq \frac{7}{5}$, the query complexity is upper bounded by 
\[
Q_1+ Q_6+Q_7=\tilde O(n^{1 + {\ell/2} -d} + n^{{3/2} - {b_4/2}} + n^{b_4 + d/2}), 
\]
which is optimized by taking $b_4 = 1 - \frac{\ell}{7}$ and $d = \frac{3\ell}{7}$, giving the upper bound $\tilde O(n^{1 + \ell/14})$.

If $\frac{7}{5} \leq \ell \leq \frac{3}{2}$, the query complexity is upper bounded by
\[
Q_1+ Q_6+Q_7=\tilde O(n^{3/2 - b_4/2} + n^{1/2 + d} + n^{\ell - d/2}),
\]
which is optimized by taking $b_4 = \frac{8}{3} - \frac{4\ell}{3}$ and $d = \frac{2\ell}{3} - \frac{1}{3}$, giving the upper bound $\tilde O(n^{1/6 + 2\ell/3})$.

If $\frac{3}{2} \leq \ell \leq \frac{13}{8}$, the query complexity is upper bounded by 
\begin{multline*}
Q_1 + Q_3 + Q_4 + Q_5 + Q_6 = \tilde O(n^{3/2 - b_4/2} + n^{1/2 + d} + n^{a_2 + d + k_2 - \ell/2} \\
+ n^{1 + b_2 + \ell/2 - a_2 - d/2} + n^{1 + \ell/2 - b_2/2 - d/2} + n^{1 + \ell/2 - d/2 - k_2/2} \\
+ n^{b_3 + \ell - a_3 - d/2} + n^{1/2 + \ell - b_3/2 - d} + n^{1/2 + \ell - d - k_3/2}),
\end{multline*}
which is optimized by taking $a_2 = \frac{3}{10} + \frac{3\ell}{10}$, $a_3 = \frac{23\ell}{15} - \frac{59}{30}$, $b_2 = k_2 = \frac{1}{5} + \frac{\ell}{5}$, $b_3 = k_3 = \frac{14\ell}{15} - \frac{16}{15}$, $b_4 = \frac{22}{15} - \frac{8\ell}{15}$ and $d = \frac{4}{15} + \frac{4\ell}{15}$, giving the upper bound $\tilde O(n^{23/30 + 4\ell/15})$.

If $\frac{13}{8} \leq \ell \leq 2$, the query complexity is upper bounded by
\begin{multline*}
Q_1 + Q_2 + Q_3 + Q_4 + Q_5 = \tilde O(n^{a_1 + {d/2} + k_1 - {1/2}} + n^{1  + b_1 +  {d/2} -  a_1} \\  
 +  n^{{3/2}  -  {b_1/2}} +  n^{{3/2}  -   {k_1/2}}  +  n^{a_2  +  d  +  k_2  -   {\ell/2}}  +  n^{1 + b_2 + {\ell/2} - a_2 - {d/2}} + n^{1 + {\ell/2} - {b_2/2} - {d/2}} \\
+ n^{1 + {\ell/2} - {d/2} - {k_2/2}} + n^{b_3 + \ell - a_3 - {d/2}} + n^{{1/2} + \ell - {b_3/2} - d} + n^{{1/2} + \ell - d - {k_3/2}} ),
\end{multline*}
which is optimized by taking $a_1 =  \frac{3}{4}$, $a_2 = \frac{19}{20} - \frac{\ell}{10}$, $a_3 = \frac{3\ell}{5} - \frac{9}{20}$, $b_1  =  k_1 = \frac{31}{30} -  \frac{4\ell}{15}$, $b_2 =  k_2 = \frac{19}{30} -  \frac{\ell}{15}$, $b_3 = k_3 = \frac{7}{30} + \frac{2\ell}{15}$ and $d  =  \frac{4\ell}{5} - \frac{3}{5}$, giving the upper bound $\tilde O(n^{59/60 + 2\ell/15})$.

For the case $\ell\le \frac{7}{6}$ we can obtain a better upper bound by using the $O(n+\sqrt{nm})$-query algorithm by Buhrman et al. \cite{Buhrman+SICOMP05}. This upper bound actually corresponds to a degenerate case appearing in our approach: the case $d=0$. Indeed, observe that without loss of generality we can assume that $\deg(v)\ge 1$ for all vertices $v\in V$ (for instance by adding dummy vertices to the graph). In this case we have $\mathcal{V}^{d}_h=V$ for $d=0$, which means that we do not need to apply the algorithm of Proposition \ref{prop:classification} in order to obtain a classification: we simply output $V^{0}_h=V$ and $V^0_l=\emptyset$ (i.e., all the vertices of the graph are $0$-high). The only type of triangles we need to consider is triangles with three 0-high vertices, which can be found with complexity $O(n+\sqrt{nm})$ by Proposition \ref{prop:B7} (in this case the algorithm of Proposition \ref{prop:B7} is exactly the same as the algorithm in \cite{Buhrman+SICOMP05}).
\end{proof}

\section*{Acknowkedgments}
The authors are grateful to Mathieu Lauri{\`e}re, Keiji Matsumoto, Harumichi Nishimura and Seiichiro Tani for helpful comments. This work is supported by the Grant-in-Aid for Young Scientists~(B)~No.~24700005, the Grant-in-Aid for Scientific Research~(A)~No.~24240001 of the Japan Society for the Promotion of Science and the Grant-in-Aid for Scientific Research on Innovative Areas~No.~24106009 of the Ministry of Education, Culture, Sports, Science and Technology in Japan.

\end{document}